\pdfoutput=1
\documentclass[11pt,reqno]{amsart}

\usepackage{amsmath}
\usepackage{amssymb}
\usepackage{amsaddr}
\usepackage{amsthm}
\usepackage{array}
\usepackage{booktabs}
\usepackage{bm}
\usepackage{caption}
\usepackage{enumerate}
\usepackage{flafter}
\usepackage{geometry}
\usepackage{graphicx}
\usepackage{hyperref}
\usepackage{multirow}
\usepackage{natbib}
\usepackage{subcaption}
\usepackage[svgnames]{xcolor}
\usepackage{stmaryrd}
\usepackage{thmtools}

\graphicspath{{./images/}}
\newtheorem{theorem}{Theorem}
\theoremstyle{plain}

\newtheorem{algorithm}{Algorithm}

\newtheorem*{assumption*}{Assumption}
\newtheorem{assumption}{Assumption}

\newtheorem{corollary}{Corollary}

\newtheorem{definition}{Definition}

\newtheorem{proposition}{Proposition}

\numberwithin{equation}{section}

\begin{document}

\title[The Markup Falsification Adaptive Set]{The Markup Falsification Adaptive Set}
\author{Santiago Acerenza, Néstor Gandelman}
\address{Universidad ORT Uruguay}
\date{\scriptsize The present version is as of \today.}

\begin{abstract}
In this paper we provide a constructive way for researchers to salvage the classic \cite{loecker2012markups} markup recovery procedure when falsified. To do this, we consider continuous relaxations of the standard assumptions behind markup estimation. By computing the values of the markup as a function of the relaxations across the set of non-falsified models, we obtain an identified set for the markup which generalizes the standard baseline markup estimand to account for possible falsification without the need to impose additional assumptions. We illustrate our results using Chilean data from \cite{raval2023testing}. We also provide an inferential procedure for the FAS endpoints based on their local differentiability structure.
\end{abstract}

\maketitle

{\footnotesize
\textbf{Keywords}: markup, falsification, sensitivity analysis, partial identification.

\textbf{JEL subject classification}: C18; C26; C51; E23; L0; L11.
}

\section{Introduction}

Markup estimation aims to recover the gap between prices and marginal costs at the firm or product level, a key object for studying market power. Empirical work on firm-level market power has developed around two primary methodological approaches. The demand approach \citep{berry1994estimating,berry1993automobile} relies on price and quantity data to infer marginal costs under alternative assumptions about firms’ profit-maximizing behavior. In contrast, the production approach does not require specifying demand systems or detailed competitive interactions. This method, which traces back to \cite{hall1988relation}, was formalized and brought to prominence by \cite{loecker2012markups}.

Within the production approach, markups are inferred from firms’ cost-minimization conditions. Under standard assumptions, the markup can be expressed as the ratio of the output elasticity of a variable input to that input’s revenue share. This approach has become widely used because it avoids the need to estimate demand systems and instead relies on the consistent estimation of production functions and input elasticities, making it particularly attractive in settings with detailed firm-level data.

When more than one variable input is available, the assumptions required to recover markups imply a set of testable restrictions on observable data. This feature renders the model empirically falsifiable, as these restrictions can be directly assessed in the data. More generally, model falsifiability refers to the property of a theoretical or structural model whereby it generates testable restrictions on observable data that could, in principle, be contradicted. A model is falsifiable if its assumptions imply relationships—such as moment conditions, inequalities, or functional forms—that must hold in the data; failure of these implications constitutes evidence against the model. In empirical economics, falsifiability is crucial because it disciplines model selection: rather than fitting any pattern ex post, a falsifiable model exposes itself to rejection through observable implications. When a model is falsified, it indicates that at least one of its underlying assumptions is inconsistent with the data, motivating either model revision, relaxation of assumptions, or a shift toward frameworks that allow for partial identification.

With finite samples, researchers often use specification tests to check whether their baseline model is false. This is the case for the production approach to markup estimation as noted by \cite{raval2023testing}. Abstracting from sampling uncertainty, the population versions of such specification tests have a persistent problem: what should researchers do when markups differ between variable inputs? Or, more abstractly, what should researchers do when the baseline markup model is falsified?

Building on the framework of \cite{masten2021salvaging}, this paper provides a constructive framework for researchers to evaluate and recover markup estimates when the underlying structural assumptions of the production approach are empirically falsified. Concurrently, our approach yields an indirect specification test for the baseline model itself.

To achieve this, we introduce continuous relaxations of two fundamental baseline assumptions: the assumption of cost minimization in the absence of adjustment costs and the correct specification of output elasticities. By systematically weakening these restrictions, an otherwise falsified baseline model can be rendered non-falsified. This allows us to define the \textit{falsification frontier}, which characterizes the minimal set of relaxations required to ensure the model remains compatible with the observed data. Under the assumption that the true structural parameters lie on this boundary, we estimate the \textit{falsification adaptive set} (FAS) for markups. The FAS generalizes the standard baseline markup estimand to explicitly account for model misspecification, crucially eliminating the need to impose restrictive alternative assumptions on the production function or cost minimization.

When interpreting these results, two core tenets of the falsification literature must be kept in mind. First, while a falsified model is logically impossible, a non-falsified model is not inherently true; it is merely rationalizable by the data. Second, when the baseline model is falsified, the rejection may stem from any combination of the underlying assumptions. Importantly, this limitation holds regardless of the geometric shape of the falsification frontier: the frontier maps the magnitude of the required violations but cannot isolate the precise causal mechanism behind the model’s structural failure.

Our work contributes to the recent markup-estimation literature by providing a falsification-based framework for the production approach. While prior studies have noted that the identifying assumptions behind production-based markups can fail, the literature has largely responded by imposing alternative structural assumptions. In contrast, we treat empirical refutation as an input to identification. We characterize the minimal relaxations of the baseline assumptions needed to rationalize the data (the falsification frontier) and, under the assumption that the true model lies on this frontier, derive the falsification adaptive set (FAS) for markups. The FAS delivers an identified set for the true markup without requiring researchers to commit to additional modeling choices.

More broadly, our paper contributes to the falsification and sensitivity-analysis literature initiated by \citet{masten2021salvaging}. That literature advocates reporting conclusions that remain consistent with minimally non-falsified models and has been extended in several directions, including computational partial identification for policy-relevant parameters \citep{HanYang2024PolicyEval}, minimax-robust inference under local misspecification \citep{BonhommeWeidner2022MinSens}, and analyses of how alternative relaxations of refuted set-identified models can lead to conflicting outer sets \citep{LiKedagniMourifie2024Discordant}. Closely related work generalizes the falsification-adaptive-set idea to linear IV settings with potentially invalid instruments \citep{ApfelWindmeijer2024GeneralizedFAS}. Complementary contributions develop sensitivity analysis for approximate moment-condition models \citep{ArmstrongKolesar2021ApproxMoment}, specification testing tools for partially identified models with many conditional moment inequalities \citep{MarcouxRussellWan2024SpecTest}, and breakdown-style analyses for IV models with binary outcomes \citep{Picchetti2025BreakdownIVBinary}. Our setting brings these ideas to production-based markup estimation and shows how falsification can be used constructively to deliver both identification and diagnostics.

At the inferential level, the FAS endpoints are globally nonsmooth but locally affine away from regime-switching boundaries. This local structure allows us to use the framework of \cite{fang2019inference} to construct valid confidence intervals for each scalar endpoint of the FAS.

The remainder of the paper is organized as follows. Section \ref{SecIdentification} reviews the production approach to markup estimation. Section \ref{Why} discusses why two key assumptions underlying the model may fail. Section \ref{SecFalsification} presents our main results, including the relaxation of the baseline assumptions and the FAS for markup models. Section \ref{SecIllustration} illustrates the results using data from \cite{raval2023testing}. Finally, Section \ref{SecConclusions} concludes. Proofs of the results are provided in Appendix \ref{Appproofs}.

\section{Markup production approach framework}\label{SecIdentification}

This section derives the expression for product markups following \cite{loecker2012markups}. Consider a firm $i$ in period $t$ with production technology
\[
Q_{it}=Q_{it}(V^{1}_{it},V^{2}_{it},K_{it},\omega_{it}),
\]
where $V^{1}_{it}$ and $V^{2}_{it}$ denote two variable inputs, $K_{it}$ denotes capital, $\omega_{it}$ is a scalar productivity term, and $Q_{it}$ denotes gross output.

\begin{assumption}\label{costmin}
Firms minimize costs in the absence of adjustment costs.
\end{assumption}

Under Assumption \ref{costmin}, the first-order condition for any variable input must be
\begin{equation}
\label{eq:foc}
P^{j}_{it} = \lambda_{it}\frac{\partial Q_{it}}{\partial V^{j}_{it}},
\end{equation}
where $P^{j}_{it}$ is the price of the variable input $V^j_{it}$ and $\lambda_{it}$ is the Lagrange multiplier, interpreted as the marginal cost of production.

Denoting the elasticity of output with respect to intermediate input $V^{j}_{it}$ as $\theta^{j}_{it}$,
\[
\theta_{it}^{j}
\equiv
\frac{\partial Q_{it}}{\partial V_{it}^j} \frac{V_{it}^j}{Q_{it}},
\]
and denoting the share of expenditures of input $V^j_{it}$ in total sales as $\alpha^{j}_{it}$,
\[
\alpha_{it}^{j}
\equiv
\frac{P_{it}^{j}V_{it}^{j}}{P_{it}Q_{it}},
\]
it follows that the markup, defined as the price-marginal-cost ratio, must equal
\begin{equation}
\label{eq:markup}
\mu_{it}^{j} = \frac{\theta_{it}^{j}}{\alpha_{it}^{j}}.
\end{equation}

In many firm-level datasets, expenditures on each intermediate input and total revenues are directly observed, allowing the corresponding input shares to be computed. Therefore, to estimate markups we only need a consistent estimator of $\theta_{it}^{j}$.

\begin{assumption}\label{elasticity}
There exists a procedure that allows unbiased recovery of the output-input elasticity:
\[
\theta^{j}_{it} = \hat{\theta}^{j}_{it}.
\]
\end{assumption}

Under Assumptions \ref{costmin} and \ref{elasticity}, a firm’s markup can be identified using variable input $j$, for $j=1,2$, as
\begin{equation}\label{EqMarkup1}
\hat{\mu}_{it}^{j}= \frac{\hat{\theta}_{it}^{j}}{\alpha_{it}^{j}}.
\end{equation}

In our context, the model would be falsified if, for the observed distribution of
\[
P_{it}, Q_{it}, V_{it}^{1}, V_{it}^{2},K_{it},P_{it}^{1}, P_{it}^{2}, P_{it}^{K},
\]
the quantities
\[
\frac{\hat{\theta}_{it}^{1}}{\alpha_{it}^{1}}
\qquad\text{and}\qquad
\frac{\hat{\theta}_{it}^{2}}{\alpha_{it}^{2}}
\]
are not equal for every $i$ and $t$. In the next section we discuss why this may occur.

\section{Why the baseline model may fail?}\label{Why}

First, if there are adjustment costs or firms do not minimize costs exactly, then the first-order condition \eqref{eq:foc} may fail, and therefore the markup may differ from \eqref{eq:markup}. These concerns have been extensively discussed in the literature.

\cite{hamermesh1996adjustment} notes that labor demand often involves adjustment costs due to changes in the identity of the individuals filling a fixed number of jobs. Such costs may include advertising vacancies, screening candidates, processing new employees, training, severance pay, and the overhead costs of maintaining recruitment and separation activities. Similarly, changes in the capital stock or in its rate of utilization can disrupt existing production plans. For example, delivery and installation of new equipment require time and internal reallocation of resources, while workers must learn how to operate the new capital.

Similar considerations may apply to intermediate inputs. Although they are often modeled as flexible margins of adjustment, modifying their use may also entail non-negligible costs. Adjusting input sourcing may require searching for and evaluating suppliers, renegotiating contracts, adapting logistics and inventory management, and coping with delivery lags or minimum order requirements. In production environments with technological complementarities or quality-specific inputs, changing the mix or intensity of intermediate inputs may also require modifications in routines, coordination across production stages, and learning about the performance of new materials or suppliers.

On the other hand, even if there are no adjustment costs, firm decision makers may exhibit bounded rationality, a notion that goes back at least to \cite{simon1955behavioral}, who emphasized that agents may satisfice rather than solve fully rational optimization problems. In this spirit, evidence from \cite{kelley2014experimental,morales2020bounded} suggests that firms may depart from frictionless cost minimization because decision makers face cognitive and computational limitations. As noted by \cite{sterman2007getting}, boundedly rational managers may fail to anticipate market saturation in time to reduce capacity. Additionally, \cite{conlisk1996bounded} argues that under bounded rationality a firm may be unable to compute, costlessly and exactly, its optimal output.

More recent work has formalized related departures from full optimization through rational inattention, where firms face information-processing constraints and therefore allocate limited attention imperfectly across decision margins \citep{sims2003implications,mackowiak2009optimal}. Likewise, the behavioral IO literature has emphasized that firms may make systematic mistakes, rely on heuristics, or use imperfect decision rules rather than behave as fully rational profit maximizers \citep{ellison2006bounded}. In these settings, even in the absence of physical or organizational adjustment costs, factor demand may become inconsistent with standard static cost minimization.

The second assumption in the previous section (correct specification of output elasticities) may also fail. This issue is closely related to production function estimation, as reviewed by \cite{de2021industrial}. Their survey covers the most widely used methods for estimating total factor productivity and recovering output elasticities. They distinguish between two broad approaches. The first is the factor-share approach, which recovers elasticities from the cost share of each input in total revenue. This approach requires all inputs to be flexible and the production technology to exhibit constant returns to scale. The second approach estimates output elasticities by directly fitting regressions of output on production inputs. This latter approach faces two well-known econometric challenges: simultaneity bias—arising because firms observe their productivity when choosing inputs—and selection bias—driven by the endogenous survival of more productive firms. The literature has addressed these problems through two main strategies: control function approaches and dynamic panel-data methods.\footnote{Dynamic panel-data methods, originally developed in a broader econometric context \citep{arellano1991some,blundell1998initial}, provide an alternative way to address these identification challenges.}

Recent work has highlighted important limitations in the identification of output elasticities. \cite{ackerberg2015identification} show that, under simple data-generating processes consistent with the standard assumptions, the moment conditions underlying the first-stage estimation of production-function parameters may fail to identify the labor coefficient, leading to misspecified elasticities. \cite{doraszelski2021reexamining} further argue that these methods are not robust to unobserved heterogeneity in demand across firms or over time. In a related contribution, they show, first, that under imperfect competition the estimation of input elasticities must account for markups themselves; and second, that under commonly used production-function specifications, cost minimization conditions are often inconsistent with observed data. Additional evidence on the limitations of existing procedures for elasticity recovery is provided by \cite{casacuberta2025use}, who show that in the presence of labor-market power the cost-share approach fails to identify the output elasticity of labor required to measure labor-market power. Finally, \cite{raval2023testing} test the implication that any flexible input should recover the same markup under the production approach and strongly reject the hypothesis that markups estimated using labor and materials share the same distribution.

In summary, within the context of the production approach to markup estimation, there are two fundamental sources of model failure under Assumptions \ref{costmin} and \ref{elasticity}:
\begin{enumerate}
    \item \textbf{Adjustment costs or failure of cost minimization.}
    This implies that although the production function is properly estimated and $\theta_{it}^{j}=\hat{\theta}_{it}^{j}$, the true markup differs from the ratio between the elasticity and the revenue share:
    \[
    \mu_{it}^{j}\neq \frac{\theta_{it}^{j}}{\alpha_{it}^{j}}.
    \]
    \item \textbf{Elasticity misspecification.}
    This implies that although
    \[
    \mu_{it}^{j}= \frac{\theta_{it}^{j}}{\alpha_{it}^{j}},
    \]
    the elasticity is not properly estimated:
    \[
    \theta_{it}^{j}\neq\hat{\theta}_{it}^{j}.
    \]
\end{enumerate}

\section{What can researchers do in the presence of falsification?}\label{SecFalsification}

As stated earlier, by sufficiently weakening the assumptions, a falsified baseline model becomes non-falsified. The continuous relaxations of our model assumptions are:

\begin{assumption}\label{AsFalsificationAdaptative}
Let $(\delta_{e,t}^{1}, \delta_{e,t}^{2}, \delta_{m,t}^{1}, \delta_{m,t}^{2})$ be a vector of nonnegative relaxation parameters.
\begin{enumerate}
    \item \textbf{Relaxation 1: bounded or frictional cost minimization:}
    \[
    \left|\mu_{it}^{j}-\frac{\theta_{it}^{j}}{\alpha_{it}^{j}}\right| \leq \delta_{m,t}^{j}.
    \]
    \item \textbf{Relaxation 2: imperfect elasticity identification:}
    \[
    |\theta_{it}^{j}-\hat{\theta}_{it}^{j}| \leq \delta_{e,t}^{j}.
    \]
\end{enumerate}
\end{assumption}

Relaxation 1 implies that equation \eqref{eq:markup} does not hold exactly. Instead, the true markup differs from the ratio of the output elasticity to the input share by at most \(\delta_{m,t}^{j}\). Relaxation 2 allows for imperfect identification of the production function. In particular, it implies that the estimation error of the output elasticity is bounded by \(\delta_{e,t}^{j}\).

Observe that, in practice, the \(\delta\)'s are relaxation parameters associated with the relevant assumptions. Although we allow them to depend on the particular variable input, in settings with no prior information on differential \(\delta\)'s it may be convenient to impose common upper bounds, denoted by \(\delta_{e,t}^{\max}\) and \(\delta_{m,t}^{\max}\), respectively.

It is worth noting that \((\delta_{e,t}^{1}, \delta_{e,t}^{2}, \delta_{m,t}^{1}, \delta_{m,t}^{2})=\mathbf{0}\) constitutes the null model from Section \ref{SecIdentification}, in which case the markup is point identified. On the other hand, \((\delta_{e,t}^{1}, \delta_{e,t}^{2}, \delta_{m,t}^{1}, \delta_{m,t}^{2})=\infty\) imposes no restrictions and therefore provides no information.

For the remainder of the paper, we use the notation
\[
\boldsymbol{\delta}_t=(\delta_{e,t}^{1}, \delta_{e,t}^{2}, \delta_{m,t}^{1}, \delta_{m,t}^{2}),
\qquad
\boldsymbol{\delta}_t^j=(\delta_{e,t}^{j}, \delta_{m,t}^{j}),
\qquad
\boldsymbol{\delta}_t^{\max}=(\delta_{e,t}^{\max}, \delta_{m,t}^{\max}),
\]
with
\[
\delta_{e,t}^{\max}=\max\{\delta_{e,t}^{1}, \delta_{e,t}^{2}\},
\qquad
\delta_{m,t}^{\max}=\max\{\delta_{m,t}^{1}, \delta_{m,t}^{2}\}.
\]

\paragraph{Parameter of interest.}
Several parameters may be relevant for researchers interested in markups, such as quantiles or values at mean inputs. In this paper we focus on the average markup at time $t$:
\[
\bar{\mu}_{t}\equiv \mathbb{E}[\mu_{it}].
\]

For any fixed level of $\boldsymbol{\delta}_t$ we have the following result.

\begin{proposition}\label{Proposition1}
Assume that prices and marginal costs are nonnegative. Under Assumption \ref{AsFalsificationAdaptative}, for a fixed level of $\boldsymbol{\delta}_t$, the average markup at time $t$ is partially identified as
\[
\max\{0, \max_{j=1,2}LB_{\boldsymbol{\delta}^{j}_{t}}\}
\;\leq\;
\bar{\mu}_{t}
\;\leq\;
\min_{j=1,2} UB_{\boldsymbol{\delta}^{j}_{t}},
\]
where
\begin{align*}
LB_{\boldsymbol{\delta}^{j}_{t}}
&\equiv
\mathbb{E}\Big(\frac{\hat{\theta}_{it}^{j}}{\alpha_{it}^{j}}\Big)
-
\delta_{e,t}^{j}\mathbb{E}\Big(\frac{1}{\alpha_{it}^{j}}\Big)
-
\delta_{m,t}^{j}, \\
UB_{\boldsymbol{\delta}^{j}_{t}}
&\equiv
\mathbb{E}\Big(\frac{\hat{\theta}_{it}^{j}}{\alpha_{it}^{j}}\Big)
+
\delta_{e,t}^{j}\mathbb{E}\Big(\frac{1}{\alpha_{it}^{j}}\Big)
+
\delta_{m,t}^{j}.
\end{align*}
\end{proposition}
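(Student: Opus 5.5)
The plan is to derive pointwise bounds on $\mu_{it}$ from each input $j$ separately, take expectations, and then intersect across $j=1,2$ together with the nonnegativity restriction. Throughout, $\mu_{it}$ denotes the true firm-level markup, common to both inputs, so $\mu_{it}=\mu^{1}_{it}=\mu^{2}_{it}$. Shares $\alpha_{it}^{j}$ are positive because expenditures and revenues are positive, so dividing by them preserves inequalities.

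\textbf{Step 1 (pointwise bounds for fixed $j$).} By Relaxation 1 of Assumption \ref{AsFalsificationAdaptative},
\[
\frac{\theta_{it}^{j}}{\alpha_{it}^{j}}-\delta_{m,t}^{j}\le \mu_{it}\le \frac{\theta_{it}^{j}}{\alpha_{it}^{j}}+\delta_{m,t}^{j}.
\]
Relaxation 2 gives $\hat\theta_{it}^{j}-\delta_{e,t}^{j}\le \theta_{it}^{j}\le \hat\theta_{it}^{j}+\delta_{e,t}^{j}$. Dividing by $\alpha_{it}^{j}>0$ and substituting yields
\[
\frac{\hat\theta_{it}^{j}}{\alpha_{it}^{j}}-\frac{\delta_{e,t}^{j}}{\alpha_{it}^{j}}-\delta_{m,t}^{j}\le\mu_{it}\le \frac{\hat\theta_{it}^{j}}{\alpha_{it}^{j}}+\frac{\delta_{e,t}^{j}}{\alpha_{it}^{j}}+\delta_{m,t}^{j}
\]
almost surely.

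\textbf{Step 2 (expectations).} The $\delta$'s are constants indexed only by $t$. Taking expectations (monotonicity and linearity) therefore gives $LB_{\boldsymbol{\delta}^{j}_{t}}\le\bar\mu_t\le UB_{\boldsymbol{\delta}^{j}_{t}}$ for each $j$. This step requires the moments $\mathbb{E}(\hat\theta^{j}_{it}/\alpha^{j}_{it})$ and $\mathbb{E}(1/\alpha^{j}_{it})$ to be finite; I will state that as an implicit regularity condition.

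\textbf{Step 3 (intersection).} The bounds hold for both $j=1$ and $j=2$ simultaneously, since the same $\mu_{it}$ appears in both. Hence $\bar\mu_t\ge\max_j LB_{\boldsymbol{\delta}^{j}_{t}}$ and $\bar\mu_t\le\min_j UB_{\boldsymbol{\delta}^{j}_{t}}$. Nonnegativity of prices and marginal costs gives $\mu_{it}\ge0$, so $\bar\mu_t\ge 0$, and combining produces the displayed inequality.

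\textbf{Main obstacle (sharpness).} The delicate point is whether the statement "partially identified as" is meant to claim sharpness. Validity of the bounds is immediate from the steps above. Sharpness is not automatic, because the pointwise intersection $\max_j$ of the lower bounds taken inside the expectation would generally be tighter than the max of the expectations. I would therefore present the result as valid (outer) bounds that follow directly from the assumptions. If sharpness is needed, the route would be to construct $\theta^{j}_{it}$ and $\mu_{it}$ attaining the endpoints, setting errors at the extreme values $\pm\delta$ with a common sign. Such a construction requires the $j$-specific intervals to overlap pointwise, and I expect that to fail in general, so I would not claim it.
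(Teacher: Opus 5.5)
Your proposal is correct and follows the paper's own proof: pointwise bounds from Relaxation 1, then Relaxation 2 substituted after dividing by $\alpha_{it}^{j}$, then nonnegativity of $\mu_{it}$, expectations, and intersection across $j=1,2$. Your extra remarks are consistent with the paper, which establishes only validity (not sharpness) and uses these points implicitly: positive shares, finite moments, a single $\mu_{it}$ common to both inputs, and the caveat that the bounds need not be sharp.
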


The previous proposition extends the baseline analysis. When $\boldsymbol{\delta}_t=\mathbf{0}$ we recover the baseline point-identification result. The identified set becomes empty if the maximum lower bound exceeds the minimum upper bound.

In empirical work, where precise information on input-specific structural violations is rarely available, treating inputs symmetrically can be a useful conservative strategy. By adopting the uniform relaxation parameters $\delta_{e,t}^{\max}$ and $\delta_{m,t}^{\max}$, the following corollary provides a computationally convenient worst-case bound.

\begin{corollary}\label{Corollary1}
Under Assumption \ref{AsFalsificationAdaptative} with common relaxation levels
$\delta_{e,t}^{\max}$ and $\delta_{m,t}^{\max}$, the average markup at time $t$ is partially identified as
\[
\max\{0,\max_{j=1,2}\mathrm{LB}_j(\boldsymbol{\delta}_t^{\max})\}
\;\leq\;
\bar{\mu}_{t}
\;\leq\;
\min_{j=1,2} \mathrm{UB}_j(\boldsymbol{\delta}_t^{\max}),
\]
where
\begin{align*}
\mathrm{LB}_j(\boldsymbol{\delta}_t^{\max})
&\equiv
\mathbb{E}\Big(\frac{\hat{\theta}_{it}^{j}}{\alpha_{it}^{j}}\Big)
-
\delta_{e,t}^{\max}\mathbb{E}\Big(\frac{1}{\alpha_{it}^{j}}\Big)
-
\delta_{m,t}^{\max}, \\
\mathrm{UB}_j(\boldsymbol{\delta}_t^{\max})
&\equiv
\mathbb{E}\Big(\frac{\hat{\theta}_{it}^{j}}{\alpha_{it}^{j}}\Big)
+
\delta_{e,t}^{\max}\mathbb{E}\Big(\frac{1}{\alpha_{it}^{j}}\Big)
+
\delta_{m,t}^{\max}.
\end{align*}
\end{corollary}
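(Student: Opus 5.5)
The plan is to obtain Corollary \ref{Corollary1} as a direct specialization of Proposition \ref{Proposition1}, which we take as given. The corollary concerns the model in which both inputs share the common relaxation levels $\delta_{e,t}^{\max}$ and $\delta_{m,t}^{\max}$.

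\textbf{Step 1: reduce to Proposition \ref{Proposition1}.} Interpret the corollary as imposing Assumption \ref{AsFalsificationAdaptative} with $\delta_{e,t}^{j}\le\delta_{e,t}^{\max}$ and $\delta_{m,t}^{j}\le\delta_{m,t}^{\max}$ for $j=1,2$. Since the constraint sets in Relaxations 1 and 2 are monotone in the $\delta$'s, every model satisfying the original restrictions also satisfies
\[
\bigl|\theta_{it}^{j}-\hat{\theta}_{it}^{j}\bigr|\le \delta_{e,t}^{\max}
\qquad\text{and}\qquad
\Bigl|\mu_{it}^{j}-\frac{\theta_{it}^{j}}{\alpha_{it}^{j}}\Bigr|\le \delta_{m,t}^{\max}
\]
for each $j$. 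This is exactly Assumption \ref{AsFalsificationAdaptative} evaluated at $\boldsymbol{\delta}_t^{j}=\boldsymbol{\delta}_t^{\max}$ for $j=1,2$. Applying Proposition \ref{Proposition1} at this vector replaces $\delta_{e,t}^{j}$ and $\delta_{m,t}^{j}$ by $\delta_{e,t}^{\max}$ and $\delta_{m,t}^{\max}$ in $LB_{\boldsymbol{\delta}_t^{j}}$ and $UB_{\boldsymbol{\delta}_t^{j}}$. The resulting expressions are literally $\mathrm{LB}_j(\boldsymbol{\delta}_t^{\max})$ and $\mathrm{UB}_j(\boldsymbol{\delta}_t^{\max})$.

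\textbf{Step 2: monotonicity check.} Because $\mathbb{E}(1/\alpha_{it}^{j})>0$ (shares are positive), we have
\[
\mathrm{LB}_j(\boldsymbol{\delta}_t^{\max})\le LB_{\boldsymbol{\delta}_t^{j}}
\qquad\text{and}\qquad
\mathrm{UB}_j(\boldsymbol{\delta}_t^{\max})\ge UB_{\boldsymbol{\delta}_t^{j}}.
\]
So the corollary's interval contains the interval of Proposition \ref{Proposition1} at the input-specific $\delta$'s. This justifies calling it a conservative, worst-case bound. It also confirms that the bounds remain valid even if the true relaxations are input-specific but dominated by the maxima.

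\textbf{Main obstacle.} There is essentially no analytic difficulty. The only care needed is to keep the maintained conditions of Proposition \ref{Proposition1}, namely nonnegative prices and marginal costs, which justify the truncation at $0$. One also needs finiteness of $\mathbb{E}(\hat\theta_{it}^j/\alpha_{it}^j)$ and $\mathbb{E}(1/\alpha_{it}^j)$, so that the expectations in the bounds are well defined. With these stated, the corollary follows immediately by substitution.
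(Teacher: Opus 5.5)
Your proposal is correct and matches the paper's argument. The paper gives no separate proof: it obtains Corollary \ref{Corollary1} by substituting the common levels $\delta_{e,t}^{\max},\delta_{m,t}^{\max}$ into Proposition \ref{Proposition1}, which is exactly your Step 1, and your Step 2 monotonicity remark is a valid addition showing the bound stays valid, and conservative, whenever the input-specific relaxations are dominated by the maxima.
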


It is important to emphasize that the identified set depends on the data through the moments
\[
\mathbb{E}\Big(\frac{1}{\alpha_{it}^{j}}\Big)
\qquad\text{and}\qquad
\mathbb{E}\Big(\frac{\hat{\theta}_{it}^{j}}{\alpha_{it}^{j}}\Big).
\]

Rather than forcing the researcher to choose specific values of $\boldsymbol{\delta}_t$, a more robust strategy is to ask: what is the minimal level of relaxation required to make the model compatible with the observed data? This leads to the falsification frontier.

\begin{proposition}\label{Proposition2}
Under Assumption \ref{AsFalsificationAdaptative} with common relaxation levels
$\delta_{e,t}^{\max}$ and $\delta_{m,t}^{\max}$, the falsification frontier is
\[
FF(\boldsymbol{\delta}_t)= FF_{1,2}(\boldsymbol{\delta}_t)\cup FF_{2,1}(\boldsymbol{\delta}_t),
\]
where
\[
FF_{j,j'}(\boldsymbol{\delta}_t)
\equiv
\left\{
(\delta_{m,t}^{\max}, \delta_{e,t}^{\max})
:
\delta_{m,t}^{\max}
=
\frac{
\mathbb{E}\left(\frac{\hat{\theta}_{it}^{j}}{\alpha_{it}^{j}}\right)
-
\mathbb{E}\left(\frac{\hat{\theta}_{it}^{j'}}{\alpha_{it}^{j'}}\right)
-
\delta_{e,t}^{\max}
\left[
\mathbb{E}\left(\frac{1}{\alpha_{it}^{j}}\right)
+
\mathbb{E}\left(\frac{1}{\alpha_{it}^{j'}}\right)
\right]
}{2}
\right\}.
\]
\end{proposition}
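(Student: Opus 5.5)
The plan is to use Corollary \ref{Corollary1} to characterize the non-falsified region, and then identify the frontier as the boundary of that region where the identified set shrinks to a single point. Write $m_j \equiv \mathbb{E}(\hat{\theta}_{it}^{j}/\alpha_{it}^{j})$ and $a_j \equiv \mathbb{E}(1/\alpha_{it}^{j})$, and set $s \equiv a_1 + a_2$. By Corollary \ref{Corollary1}, the model with common relaxation $(\delta_{m,t}^{\max},\delta_{e,t}^{\max})$ is non-falsified exactly when $\max_{j}\mathrm{LB}_j \le \min_{j}\mathrm{UB}_j$, with the lower bound also truncated at $0$. Since $\mathrm{LB}_j \le \mathrm{UB}_j$ holds trivially for each fixed $j$, the only binding comparisons are the cross ones, $\mathrm{LB}_j \le \mathrm{UB}_{j'}$ for $j \neq j'$.

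\textbf{Step 1.} Expand the cross inequality:
\[
m_j - \delta_{e,t}^{\max} a_j - \delta_{m,t}^{\max} \le m_{j'} + \delta_{e,t}^{\max} a_{j'} + \delta_{m,t}^{\max}.
\]
This is equivalent to
\[
\delta_{m,t}^{\max} \ge \tfrac{1}{2}\big(m_j - m_{j'} - \delta_{e,t}^{\max} s\big).
\]
So the non-falsified set is
\[
\Big\{(\delta_{m,t}^{\max},\delta_{e,t}^{\max}) \ge 0 : \delta_{m,t}^{\max} \ge \max_{j \neq j'} \tfrac{1}{2}\big(m_j - m_{j'} - \delta_{e,t}^{\max} s\big)\Big\}.
\]
For $j=1,2$ the two right-hand sides differ only by the sign of $m_j - m_{j'}$. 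The maximum is therefore attained at the pair with $m_j \ge m_{j'}$, which gives the value $\big(|m_1 - m_2| - \delta_{e,t}^{\max}s\big)/2$.

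\textbf{Step 2.} I will handle the zero truncation separately. Replacing the lower bound by $\max\{0,\cdot\}$ adds the requirement $0 \le \min_j \mathrm{UB}_j$. This holds automatically, because $m_j \ge 0$ and $a_j > 0$ when elasticities and shares are nonnegative, so the truncation does not change the set.

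\textbf{Step 3.} Define the frontier as the set of minimal relaxations, meaning the boundary points where the cross inequality holds with equality. Here the identified set is a singleton, and lowering either coordinate would falsify the model. Monotonicity in both $\delta$'s makes the lower boundary of the set an epigraph boundary, so it equals $\{\delta_{m,t}^{\max} = (m_j - m_{j'} - \delta_{e,t}^{\max}s)/2\}$ for the binding pair. Taking the union over the two orderings $(j,j')\in\{(1,2),(2,1)\}$ gives $FF_{1,2}\cup FF_{2,1}$ as stated. The non-binding ordering describes a line with negative intercept, and I would remark that it does not intersect the nonnegative quadrant in the relevant region; including it in the union is only a notational device for not knowing ex ante which input yields the larger baseline markup.

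The main obstacle is interpretive rather than computational. It lies in justifying that the frontier consists exactly of these equality points, that is, that minimality in the componentwise partial order coincides with equality in the binding cross constraint. I would argue this by noting that the constraint is continuous and strictly decreasing in $\delta_{m,t}^{\max}$ for fixed $\delta_{e,t}^{\max}$. I would also restrict attention to the nonnegative quadrant, where the binding line is valid for $\delta_{e,t}^{\max} \le |m_1-m_2|/s$.
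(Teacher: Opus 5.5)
Your proposal is correct and follows essentially the same argument as the paper. The paper enumerates six cases according to which term attains $\max\{0,\max_j \mathrm{LB}_j\}$ and which attains $\min_j \mathrm{UB}_j$, and shows that the same-input cases and the zero-truncation cases can never yield an empty identified set, using the same nonnegativity of $\mathbb{E}(\hat{\theta}_{it}^{j}/\alpha_{it}^{j})$ you invoke. It then reads off the frontier as the equality $\mathrm{LB}_j=\mathrm{UB}_{j'}$ in the two cross cases. Your Steps 1--3 compress that enumeration into a direct pairwise check. Your explicit verification that the equality points are componentwise minimal, and your remark that only the ordering with the larger baseline markup meets $\mathbb{R}_+^2$, are small clarifications the paper leaves implicit by appealing to the Masten--Poirier definition.
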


An immediate consequence is that the markup model is not falsifiable unless there are at least two variable inputs.

\begin{corollary}\label{Corollary2}
Under Assumption \ref{AsFalsificationAdaptative}, if there is only one variable input, the falsification frontier is empty.
\end{corollary}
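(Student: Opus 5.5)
The plan is to reduce Corollary \ref{Corollary2} to the structure of Proposition \ref{Proposition1} (or Corollary \ref{Corollary1}) specialized to a single variable input, and then read off the falsification frontier. With only one variable input $j$, there is no index $j'\neq j$, so the identified set for $\bar{\mu}_t$ has the form
\[
\max\{0, LB_{\boldsymbol{\delta}^{j}_{t}}\}\;\leq\;\bar{\mu}_t\;\leq\; UB_{\boldsymbol{\delta}^{j}_{t}}.
\]
I will show that this interval is nonempty for every $\boldsymbol{\delta}^j_t\geq \mathbf{0}$, and in particular at $\boldsymbol{\delta}^j_t=\mathbf{0}$. Then no relaxation is ever needed to rationalize the data, so the set of minimal non-falsifying relaxations that make up the frontier is empty. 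Equivalently, the sets $FF_{j,j'}$ in Proposition \ref{Proposition2} are indexed by ordered pairs of distinct inputs, and there are no such pairs.

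The key steps, in order, are as follows.
\begin{enumerate}
\item Since $\delta^j_{e,t},\delta^j_{m,t}\geq 0$ and $\mathbb{E}(1/\alpha^j_{it})>0$ (shares are positive), we have $LB_{\boldsymbol{\delta}^{j}_{t}}\leq \mathbb{E}(\hat\theta^j_{it}/\alpha^j_{it})\leq UB_{\boldsymbol{\delta}^{j}_{t}}$. Hence $LB\le UB$ holds trivially.
\item We must also compare the $0$ in the lower bound with $UB$, i.e.\ show $UB_{\boldsymbol{\delta}^{j}_{t}}\geq 0$. This follows if $\mathbb{E}(\hat\theta^j_{it}/\alpha^j_{it})\geq 0$. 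That holds because estimated elasticities of a variable input are nonnegative, or because the baseline model itself imposes $\mu^j_{it}=\hat\theta^j_{it}/\alpha^j_{it}$ with nonnegative prices and marginal costs.
\item Conclude that the baseline model ($\boldsymbol{\delta}_t=\mathbf{0}$) is never falsified, since it point identifies $\bar\mu_t=\mathbb{E}(\hat\theta^j_{it}/\alpha^j_{it})$. Also note directly that the frontier equation of Proposition \ref{Proposition2} arises from equating $LB_j$ and $UB_{j'}$ across two distinct inputs, and that equation cannot be formed here.
\end{enumerate}

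The main obstacle is step 2, the interaction with the truncation at $0$. If $\mathbb{E}(\hat\theta^j_{it}/\alpha^j_{it})<0$ were allowed, small $\boldsymbol\delta$ would give an empty set. I will handle this by invoking the maintained nonnegativity of prices and marginal costs from Proposition \ref{Proposition1}. If that is not enough, I will take $\hat\theta^j_{it}\geq 0$ as part of a well-defined elasticity estimate. A second subtlety is definitional: the frontier should be defined as the set of minimal $\boldsymbol\delta$ at which the identified set transitions from empty to nonempty. I will make explicit that when the set is nonempty at $\mathbf{0}$, there is no such transition, so the frontier is empty rather than $\{\mathbf{0}\}$.
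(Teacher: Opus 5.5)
Your proposal is correct and follows essentially the same route as the paper: the paper specializes Proposition \ref{Proposition1} to one input and splits on whether $LB\le 0$ or $LB>0$, showing that falsification would require either $-2\big(\delta_{e,t}^{\max}\mathbb{E}(1/\alpha_{it}^{1})+\delta_{m,t}^{\max}\big)>0$ or $UB<0$, and neither can occur. Your two checks ($LB\le UB$ and $UB\ge 0$) are exactly these two cases, and you are right that the second needs $\mathbb{E}(\hat\theta^{j}_{it}/\alpha^{j}_{it})\ge 0$, which the paper simply asserts as a positivity fact without further justification.
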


The falsification adaptive set (FAS) is the union of all identified sets evaluated along the falsification frontier. When the baseline model is not falsified, the FAS collapses to the baseline singleton. When the baseline model is falsified, the FAS expands to reflect uncertainty about which minimally non-falsified relaxation profile is correct.

\begin{theorem}\label{Thm1}
Under Assumption \ref{AsFalsificationAdaptative} with common relaxation parameters $\delta_{e,t}^{\max}$ and $\delta_{m,t}^{\max}$, the falsification adaptive set for
\[
\bar{\mu}_{t} = \mathbb{E}[\mu_{it}]
\]
is determined by:
{\small
\begin{enumerate}
    \item
    \[
    FAS =
    \left(
    \mathbb{E}\left(\frac{\hat{\theta}_{it}^{2}}{\alpha_{it}^{2}}\right),
    \frac{
    \mathbb{E}\left(\frac{\hat{\theta}_{it}^{1}}{\alpha_{it}^{1}}\right)
    +
    \mathbb{E}\left(\frac{\hat{\theta}_{it}^{2}}{\alpha_{it}^{2}}\right)
    }{2}
    \right)
    \]
    if
    \[
    \mathbb{E}\left(\frac{1}{\alpha_{it}^{2}}\right)
    \leq
    \mathbb{E}\left(\frac{1}{\alpha_{it}^{1}}\right)
    \quad\text{and}\quad
    \mathbb{E}\left(\frac{\hat{\theta}_{it}^{1}}{\alpha_{it}^{1}}\right)
    \geq
    \mathbb{E}\left(\frac{\hat{\theta}_{it}^{2}}{\alpha_{it}^{2}}\right).
    \]

    \item
    \[
    FAS =
    \left(
    \frac{
    \mathbb{E}\left(\frac{\hat{\theta}_{it}^{1}}{\alpha_{it}^{1}}\right)
    +
    \mathbb{E}\left(\frac{\hat{\theta}_{it}^{2}}{\alpha_{it}^{2}}\right)
    }{2},
    \mathbb{E}\left(\frac{\hat{\theta}_{it}^{2}}{\alpha_{it}^{2}}\right)
    \right)
    \]
    if
    \[
    \mathbb{E}\left(\frac{1}{\alpha_{it}^{2}}\right)
    \leq
    \mathbb{E}\left(\frac{1}{\alpha_{it}^{1}}\right)
    \quad\text{and}\quad
    \mathbb{E}\left(\frac{\hat{\theta}_{it}^{1}}{\alpha_{it}^{1}}\right)
    <
    \mathbb{E}\left(\frac{\hat{\theta}_{it}^{2}}{\alpha_{it}^{2}}\right).
    \]

    \item
    \[
    FAS =
    \left(
    \frac{
    \mathbb{E}\left(\frac{\hat{\theta}_{it}^{1}}{\alpha_{it}^{1}}\right)
    +
    \mathbb{E}\left(\frac{\hat{\theta}_{it}^{2}}{\alpha_{it}^{2}}\right)
    }{2},
    \mathbb{E}\left(\frac{\hat{\theta}_{it}^{1}}{\alpha_{it}^{1}}\right)
    \right)
    \]
    if
    \[
    \mathbb{E}\left(\frac{1}{\alpha_{it}^{2}}\right)
    >
    \mathbb{E}\left(\frac{1}{\alpha_{it}^{1}}\right)
    \quad\text{and}\quad
    \mathbb{E}\left(\frac{\hat{\theta}_{it}^{1}}{\alpha_{it}^{1}}\right)
    \geq
    \mathbb{E}\left(\frac{\hat{\theta}_{it}^{2}}{\alpha_{it}^{2}}\right).
    \]

    \item
    \[
    FAS =
    \left(
    \mathbb{E}\left(\frac{\hat{\theta}_{it}^{1}}{\alpha_{it}^{1}}\right),
    \frac{
    \mathbb{E}\left(\frac{\hat{\theta}_{it}^{1}}{\alpha_{it}^{1}}\right)
    +
    \mathbb{E}\left(\frac{\hat{\theta}_{it}^{2}}{\alpha_{it}^{2}}\right)
    }{2}
    \right)
    \]
    if
    \[
    \mathbb{E}\left(\frac{1}{\alpha_{it}^{2}}\right)
    >
    \mathbb{E}\left(\frac{1}{\alpha_{it}^{1}}\right)
    \quad\text{and}\quad
    \mathbb{E}\left(\frac{\hat{\theta}_{it}^{1}}{\alpha_{it}^{1}}\right)
    <
    \mathbb{E}\left(\frac{\hat{\theta}_{it}^{2}}{\alpha_{it}^{2}}\right).
    \]
\end{enumerate}
}
\end{theorem}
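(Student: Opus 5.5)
Throughout I write $a_j \equiv \mathbb{E}(\hat\theta^{j}_{it}/\alpha^{j}_{it})$ and $b_j \equiv \mathbb{E}(1/\alpha^{j}_{it})$. By Proposition \ref{Proposition1} and Corollary \ref{Corollary1}, for any relaxation profile the identified set for $\bar\mu_t$ is $[\max_j LB_j,\min_j UB_j]$, with $LB_j=a_j-\delta^j_{e,t}b_j-\delta^j_{m,t}$ and $UB_j=a_j+\delta^j_{e,t}b_j+\delta^j_{m,t}$. The plan is to compute the union of these sets over the falsification frontier of Proposition \ref{Proposition2}. The frontier is indexed by the common maxima $(\delta^{\max}_{m,t},\delta^{\max}_{e,t})$, and the individual $\boldsymbol{\delta}^j_t$ may be any nonnegative values bounded by these maxima. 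I treat case 1 ($b_2\le b_1$, $a_1\ge a_2$) in detail. Case 2 is identical with the inequality between $a_1$ and $a_2$ reversed. Cases 3 and 4 follow by relabelling the inputs, so that the roles of $(a_1,b_1)$ and $(a_2,b_2)$ swap.

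\textbf{Step 1 (frontier and singleton structure).} When $a_1\ge a_2$, only $FF_{1,2}$ carries nonnegative $\delta$'s. A profile is minimally non-falsified exactly when $LB_1=UB_2$, that is,
\[
a_1-\delta^1_{e,t}b_1-\delta^1_{m,t}=a_2+\delta^2_{e,t}b_2+\delta^2_{m,t}.
\]
At such a profile the identified set is the singleton consisting of this common value. So the FAS is the set of values
\[
\mu(\boldsymbol{\delta}_t)=a_2+\delta^2_{e,t}b_2+\delta^2_{m,t}
\]
as $\boldsymbol{\delta}_t$ ranges over frontier-compatible profiles. I also check that the $\max\{0,\cdot\}$ truncation never binds, since every such value lies between $a_2$ and $a_1$.

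\textbf{Step 2 (the midpoint endpoint).} Take the symmetric profile $\delta^1_{e,t}=\delta^2_{e,t}=0$ and $\delta^1_{m,t}=\delta^2_{m,t}=\delta^{\max}_{m,t}=(a_1-a_2)/2$. This profile lies on $FF_{1,2}$ and gives $\mu=(a_1+a_2)/2$. I then show that no frontier profile with common maxima yields a value above the midpoint when $b_2\le b_1$. To do this, substitute the frontier equation of Proposition \ref{Proposition2} into $\mu$ and observe that
\[
\mu-\tfrac{a_1+a_2}{2}
\]
carries a factor $(b_2-b_1)\le 0$ times a nonnegative relaxation amount. This yields the upper endpoint.

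\textbf{Step 3 (the endpoint $a_2$; main obstacle).} The hardest step is showing that values arbitrarily close to $a_2$ are attained. Since the FAS is stated as an open interval, it suffices to approach $a_2$ in the limit. I would first try asymmetric allocations along the frontier. The idea is to load the relaxation on the input with the larger reciprocal-share moment, namely input $1$ since $b_1\ge b_2$. Concretely, set $\delta^1_{e,t}=\delta^{\max}_{e,t}$ and $\delta^1_{m,t}=\delta^{\max}_{m,t}$, and let $\delta^2_{e,t},\delta^2_{m,t}\downarrow 0$, with $(\delta^{\max}_{m,t},\delta^{\max}_{e,t})$ kept on $FF_{1,2}$. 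Then $UB_2\to a_2$. The frontier equation, combined with $b_2\le b_1$, should guarantee that $LB_1$ can be driven down to meet $UB_2$ within the common budget. If this allocation argument works, continuity of $\mu(\boldsymbol{\delta}_t)$ in $\boldsymbol{\delta}_t$ and connectedness of the frontier give every intermediate value, so the FAS is the full interval $\left(a_2,(a_1+a_2)/2\right)$.

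The delicate point is verifying that the asymmetric limit remains compatible with the common-maxima frontier of Proposition \ref{Proposition2}. If it is not, I would instead read the endpoint off the corner $\delta^{\max}_{m,t}=0$ of $FF_{1,2}$ and compare it with $a_2$ using $b_2\le b_1$. The sign conditions on $b_2-b_1$ and $a_1-a_2$ then sort the four configurations into the four stated cases.
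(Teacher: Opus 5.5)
Read in the common-relaxation sense, your Steps~1--2 match the paper's computation of the frontier value. The proof breaks at Step~3, and it cannot be repaired. Suppose $(\delta^{\max}_{m,t},\delta^{\max}_{e,t})\in FF_{1,2}$, $\delta^j_{e,t}\le\delta^{\max}_{e,t}$ and $\delta^j_{m,t}\le\delta^{\max}_{m,t}$. Then
\[
LB_1(\boldsymbol{\delta}^1_t)\ \ge\ a_1-\delta^{\max}_{e,t}b_1-\delta^{\max}_{m,t}\ =\ a_2+\delta^{\max}_{e,t}b_2+\delta^{\max}_{m,t}\ \ge\ UB_2(\boldsymbol{\delta}^2_t),
\]
with equality only at the symmetric profile, since $b_1,b_2>0$. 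So loading the relaxation on input~1 and sending $\boldsymbol{\delta}^2_t\to 0$ gives $LB_1>UB_2$: an empty identified set, i.e.\ a falsified model, not a frontier point. If instead the $\boldsymbol{\delta}^j_t$ are genuinely free, as your Step~1 suggests, the frontier is $\delta^1_{e,t}b_1+\delta^1_{m,t}+\delta^2_{e,t}b_2+\delta^2_{m,t}=a_1-a_2$. It then delivers every value in $[a_2,a_1]$, contradicting Step~2, so Steps~1 and~2 work in different relaxation spaces. On the two-dimensional frontier of Proposition~\ref{Proposition2}, which the theorem refers to, the singleton value is $\frac{a_1+a_2}{2}-\frac{\delta^{\max}_{e,t}(b_1-b_2)}{2}$. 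For $b_1>b_2$ it equals $a_2$ only at $\delta^{\max}_{e,t}=(a_1-a_2)/(b_1-b_2)$. But $\delta^{\max}_{m,t}=\frac{a_1-a_2-\delta^{\max}_{e,t}(b_1+b_2)}{2}\ge 0$ caps $\delta^{\max}_{e,t}$ at $(a_1-a_2)/(b_1+b_2)$. Your fallback corner $\delta^{\max}_{m,t}=0$ therefore yields $\frac{a_1b_2+a_2b_1}{b_1+b_2}$, which is strictly above $a_2$ whenever $a_1>a_2$.

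The paper reaches $a_2$ through its stopping rule, not through any allocation. It parametrizes the frontier by $\delta^{\max}_{e,t}$ alone and writes the singleton as $\frac{a_1+a_2+\delta^{\max}_{e,t}(b_2-b_1)}{2}$. It then takes the union as $\delta^{\max}_{e,t}$ grows from $0$, stopping only when a Case~5 regime condition from the proof of Proposition~\ref{Proposition2} fails ($LB_1\ge 0$, $LB_1\ge LB_2$, $UB_2\le UB_1$). That happens at $\delta^{\max}_{e,t}=(a_1-a_2)/(b_1-b_2)$, where the value is exactly $a_2$. At that point, however, the frontier equation gives $\delta^{\max}_{m,t}=-(a_1-a_2)b_2/(b_1-b_2)<0$. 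The union never imposes $\delta^{\max}_{m,t}\ge 0$, even though the frontier in Proposition~\ref{Proposition2} is defined on $\mathbb{R}^2_+$. So no argument with nonnegative relaxations can deliver the endpoint $a_2$; the stated endpoint exists only under that stopping rule. If you keep nonnegativity, your Steps~1--2 plus the corner computation prove that the Case~1 FAS is $\left[\frac{a_1b_2+a_2b_1}{b_1+b_2},\frac{a_1+a_2}{2}\right]$. The endpoint $a_2$ in Case~2 and the endpoint $a_1$ in Cases~3--4 fail for the same reason. In each case the correct endpoint is the same weighted average $\frac{a_1b_2+a_2b_1}{b_1+b_2}$.
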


Researchers may find it useful to present this set and its evolution over time alongside their baseline estimates, in order to assess the robustness of their conclusions and gain insight into the evolution of the average markup under weaker restrictions.

\subsection*{More than two variable inputs}
A comment is in place for settings with more than two variable inputs. Although it is possible to compute the FAS for $J$ variable inputs, the number of possible orderings of inputs and revenue-share moments quickly becomes unwieldy. We therefore suggest two alternative approaches, collected in Appendix \ref{Appmorethan2}.

\subsection*{Statistical Inference}

In finite samples, researchers can construct sample analog estimates of the falsification adaptive set together with corresponding confidence intervals. In our setting, statistical inference is complicated by the fact that the FAS endpoints are piecewise-defined functions of population moments. As a result, the global map is nonsmooth.

To formalize this, let
\[
s_t^j \equiv \mathbb{E}\!\left[\frac{1}{\alpha_{it}^{j}}\right],
\qquad
m_t^j \equiv \mathbb{E}\!\left[\frac{\hat{\theta}_{it}^{j}}{\alpha_{it}^{j}}\right],
\]
and let $\hat s_{t,n}^j$ and $\hat m_{t,n}^j$ denote their sample analogs. For a generic pair of variable inputs $(j,j')$, define
\[
\Delta s_t^{j,j'} \equiv s_t^{j'}-s_t^j,
\qquad
\Delta m_t^{j,j'} \equiv m_t^j-m_t^{j'}.
\]
The lower and upper FAS endpoints can be written as piecewise functions of these moments.

Although the global FAS map is nonsmooth, its differentiability properties are local. In particular, if
\[
\Delta s_t^{j,j'}\neq 0
\qquad\text{and}\qquad
\Delta m_t^{j,j'}\neq 0,
\]
then the true parameter lies strictly inside one regime of the piecewise formula rather than on a switching boundary. Consequently, there exists a neighborhood of the true parameter in which the active branch remains unchanged, so the endpoint map coincides locally with a single affine function. Therefore, under this regime-separation condition, each FAS endpoint is fully Hadamard differentiable at the true value.

This local differentiability result places our problem within the inferential framework of \cite{fang2019inference}. In particular, under a Gaussian limit law for the first-stage estimator, bootstrap consistency for the first-stage moments, and a nondegenerate scalar limit distribution, the bootstrap critical values constructed from the estimated directional derivative are consistent. This allows researchers to form asymptotically valid confidence intervals for each scalar endpoint of the FAS.

Appendix \ref{AppEInf} provides the formal differentiability argument, verifies Assumptions 2.1, 2.2, 3.1, 3.2 and 3.3 in \cite{fang2019inference}, and presents the construction of the corresponding bootstrap critical values and confidence intervals.

\section{Empirical Illustration}\label{SecIllustration}

Leveraging the replication package from \cite{raval2023testing}, we use the Chilean dataset employed in their analysis to conduct our own exercise. To construct the final dataset, \cite{raval2023testing} use plant-level manufacturing data for Chile covering the period 1979--1996. The source is the annual manufacturing census, Encuesta Nacional Industrial Anual (ENIA), which covers all Chilean manufacturing plants with at least 10 employees and contains information on roughly 5,000 plants per year.

The final dataset contains establishment-year information on capital, labor, materials, and sales. It also includes capital, materials, and output deflators, allowing for the construction of consistent input and output measures over time. Observations with zero or negative values for capital, labor, materials, sales, or labor costs are excluded from the sample. In addition, the data are cleaned by removing observations in the bottom and top 1\% of the distributions of labor’s revenue share, materials’ revenue share, and the composite variable-input revenue share within each industry. Labor is measured as the number of workers, while labor expenditures include salaries and fringe benefits. Materials expenditures include spending on raw materials, electricity, and fuels. The measure of capital is constructed as capital stocks multiplied by their rental rates, plus any rental payments for capital.

We estimate production functions and markups for each estimate of output elasticities using the \cite{ackerberg2015identification} (ACF) control-function estimator for a translog production function with capital, labor and materials, as in \cite{raval2023testing}.\footnote{For more details on the production function and control function, see \cite{raval2023testing}, Section 3.} Figure \ref{fig1} replicates Figure 1 from \cite{raval2023testing} for Chilean food products and illustrates the dispersion in markup estimates across inputs.

\begin{figure}[h]
\caption{Distribution of Translog Markups for Chilean Food Products}
\label{fig1}
\centering
\includegraphics[width=1\linewidth]{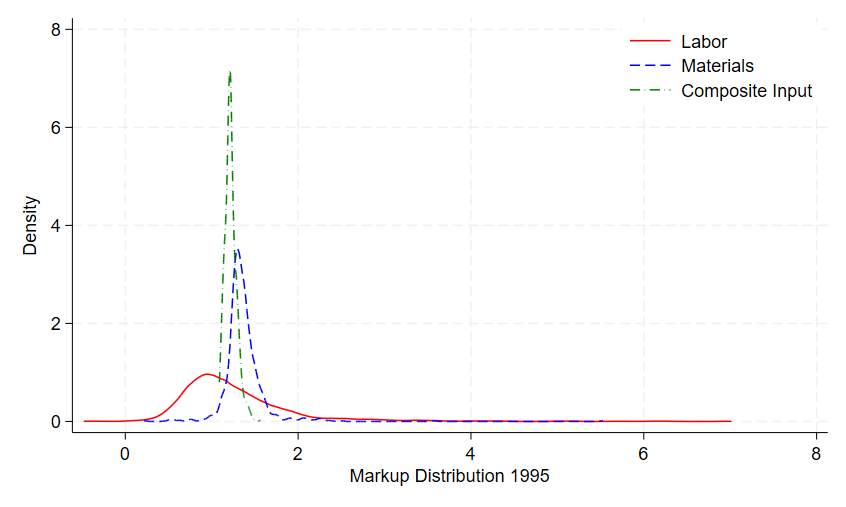}
\end{figure}

The dispersion observed in Figure \ref{fig1} translates into different average markup estimates across inputs, as shown in Figure \ref{fig2}. Not only the levels differ, but also the implied time trend. While one series presents an inverted-U shape, the other suggests a steady increase in markups. Thus, conclusions are highly sensitive to the selected input.

\begin{figure}[h]
\caption{Average Estimate of the Markup Over Time}
\label{fig2}
\centering
\includegraphics[width=1\linewidth]{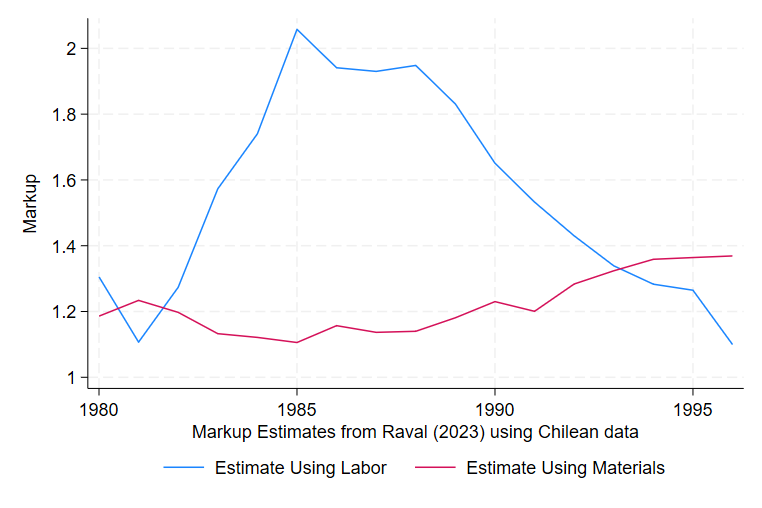}
\end{figure}

To obtain a robust measure of industry markups under potential misspecification, we compute the FAS, which is illustrated in Figure~\ref{fig:MainFAS}. Once model relaxations are incorporated, the sharp conclusions regarding time trends disappear. Specifically, the substantial overlap between the identified sets across years precludes any definitive statement about the directional evolution of markups. For instance, the FAS for 1985 includes parameter values that are simultaneously larger and smaller than those contained in the 1995 set, rendering traditional trend analysis non-conclusive.

In this sense, our framework serves as a valuable diagnostic tool to evaluate the robustness of time-series interpretations commonly found in the markup literature. This ambiguity is entirely consistent with the divergent input-specific trends previously documented in Figure~\ref{fig2}. Despite this loss of directional certainty, the FAS provides one highly robust conclusion: across the entire sample period, the lower bound of the identified set remains strictly above one. Consequently, even after allowing for structural violations of the baseline model, we can robustly reject the hypothesis of perfect competition.

\begin{figure}[h]
\caption{Estimated FAS}
\label{fig:MainFAS}
\centering
\includegraphics[width=1\linewidth]{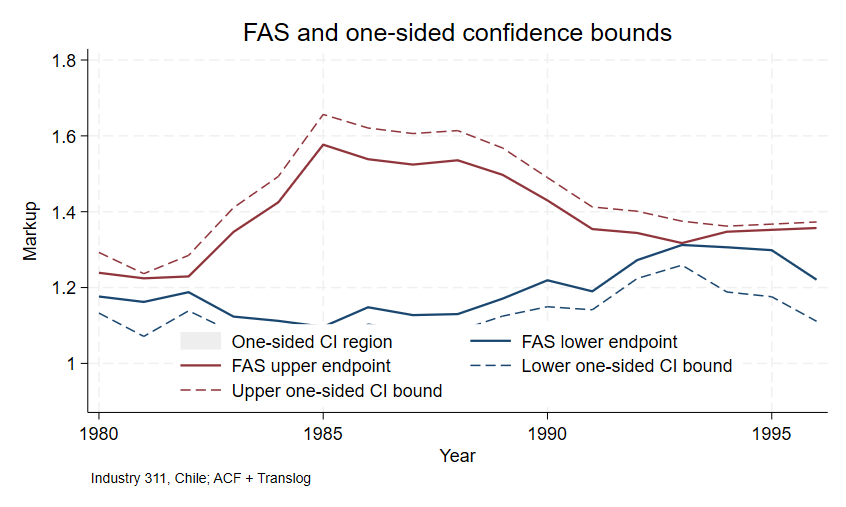}
\begin{flushleft}
\small \textit{Note:} One sided 95 percent confidence intervals, estimated from 500 Bootstrap replications following \cite{fang2019inference}.  
\end{flushleft}
\end{figure}
\section{Conclusions}\label{SecConclusions}

In this paper we provide a constructive way for researchers to salvage the \cite{loecker2012markups} markup model when falsified. To do this, we consider continuous relaxations of the assumptions underlying static cost minimization of variable inputs and correct estimation of the input-output elasticity. By computing the values of the markup as a function of the relaxations across the minimal set of non-falsified models, we obtain an identified set for the markup which generalizes the standard baseline markup estimand to account for possible falsification without the need to impose additional or new assumptions on the production function or firm behavior. We illustrate this using data from \cite{raval2023testing} and conclude that there exists robust evidence that markups exceed one, but that interpretations about the shape of the markup trend over time should be treated with caution, since they are not robust once misspecification is acknowledged.

\renewcommand{\refname}{References}
\bibliographystyle{chicago}
\bibliography{references}

\appendix

\section{Proofs}\label{Appproofs}

\subsection{Proof of the main results}

\begin{proof}[Proof of Proposition \ref{Proposition1}]
Recall that under Assumption \ref{AsFalsificationAdaptative}, part 1,
\begin{align*}
\frac{\theta_{it}^{1}}{\alpha_{it}^{1}}- \delta_{m,t}^{1}
&\leq \mu_{it} \leq
\frac{\theta_{it}^{1}}{\alpha_{it}^{1}}+ \delta_{m,t}^{1}, \\
\frac{\theta_{it}^{2}}{\alpha_{it}^{2}}- \delta_{m,t}^{2}
&\leq \mu_{it} \leq
\frac{\theta_{it}^{2}}{\alpha_{it}^{2}}+ \delta_{m,t}^{2}.
\end{align*}
Now, by Assumption \ref{AsFalsificationAdaptative}, part 2,
\begin{align*}
\frac{\hat{\theta}_{it}^{1}}{\alpha_{it}^{1}}
- \delta_{e,t}^{1}\frac{1}{\alpha_{it}^{1}}
- \delta_{m,t}^{1}
&\leq \mu_{it} \leq
\frac{\hat{\theta}_{it}^{1}}{\alpha_{it}^{1}}
+\delta_{e,t}^{1}\frac{1}{\alpha_{it}^{1}}
+ \delta_{m,t}^{1}, \\
\frac{\hat{\theta}_{it}^{2}}{\alpha_{it}^{2}}
- \delta_{e,t}^{2}\frac{1}{\alpha_{it}^{2}}
- \delta_{m,t}^{2}
&\leq \mu_{it} \leq
\frac{\hat{\theta}_{it}^{2}}{\alpha_{it}^{2}}
+\delta_{e,t}^{2}\frac{1}{\alpha_{it}^{2}}
+ \delta_{m,t}^{2}.
\end{align*}
We furthermore know that markups must be nonnegative:
\[
0\leq \mu_{it}\leq +\infty.
\]
Taking expectations yields the corresponding bounds for $\bar{\mu}_t$. Intersecting the resulting intervals gives the claim.
\end{proof}

\begin{proof}[Proof of Proposition \ref{Proposition2}]
It should be noted that the identified set of the average markup is an interval. Thus, the set would be empty if $\max\{0, \max_{j=1,2}\text{LB}_j(\boldsymbol{\delta}_t^{\max})\}>  \min_{j=1,2}\text{UB}_j(\boldsymbol{\delta}_t^{\max})$
This can happen in one if several scenarios depending on which element is the maximum and which one is the minimum. We now proceed on a case by case basis.
\par 
\textbf{Case 1:} 
$$\max\{0, \max_{j=1,2}\text{LB}_j(\boldsymbol{\delta}_t^{\max})\}=\mathbb{E}\Big(\frac{\hat{\theta}_{it}^{1}}{ \alpha_{it}^{1}}\Big)- \delta_{e,t}^{\text{max}}\mathbb{E}\Big(\frac{1}{ \alpha_{it}^{1}}\Big)- \delta_{m,t}^{\text{max}}$$
$$ \min_{j=1,2}\text{UB}_j(\boldsymbol{\delta}_t^{\max})=\mathbb{E}\Big(\frac{\hat{\theta}_{it}^{1}}{ \alpha_{it}^{1}}\Big)+\delta_{e,t}^{\text{max}}\mathbb{E}\Big(\frac{1}{ \alpha_{it}^{1}}\Big)+ \delta_{m,t}^{\text{max}}$$
It should be noted that this happens when 
\begin{eqnarray*}
    \mathbb{E}\Big(\frac{\hat{\theta}_{it}^{1}}{ \alpha_{it}^{1}}\Big)- \delta_{e,t}^{\text{max}}\mathbb{E}\Big(\frac{1}{ \alpha_{it}^{1}}\Big)- \delta_{m,t}^{\text{max}}&\geq& 0 \\
    \mathbb{E}\Big(\frac{\hat{\theta}_{it}^{1}}{ \alpha_{it}^{1}}\Big)- \delta_{e,t}^{\text{max}}\mathbb{E}\Big(\frac{1}{ \alpha_{it}^{1}}\Big)- \delta_{m,t}^{\text{max}}&\geq& \mathbb{E}\Big(\frac{\hat{\theta}_{it}^{2}}{ \alpha_{it}^{2}}\Big)- \delta_{e,t}^{\text{max}}\mathbb{E}\Big(\frac{1}{ \alpha_{it}^{2}}\Big)- \delta_{m,t}^{\text{max}}\\
    \mathbb{E}\Big(\frac{\hat{\theta}_{it}^{1}}{ \alpha_{it}^{1}}\Big)+\delta_{e,t}^{\text{max}}\mathbb{E}\Big(\frac{1}{ \alpha_{it}^{1}}\Big)+ \delta_{m,t}^{\text{max}}&\leq& \mathbb{E}\Big(\frac{\hat{\theta}_{it}^{2}}{ \alpha_{it}^{2}}\Big)+\delta_{e,t}^{\text{max}}\mathbb{E}\Big(\frac{1}{ \alpha_{it}^{2}}\Big)+ \delta_{m,t}^{\text{max}}
\end{eqnarray*}
Or more compactly: 
\begin{eqnarray*}
    \mathbb{E}\Big(\frac{\hat{\theta}_{it}^{1}}{ \alpha_{it}^{1}}\Big)- \delta_{e,t}^{\text{max}}\mathbb{E}\Big(\frac{1}{ \alpha_{it}^{1}}\Big) &\geq& \delta_{m,t}^{\text{max}}\\
    \mathbb{E}\Big(\frac{\hat{\theta}_{it}^{1}}{ \alpha_{it}^{1}}\Big)-\mathbb{E}\Big(\frac{\hat{\theta}_{it}^{2}}{ \alpha_{it}^{2}}\Big)+ \delta_{e,t}^{\text{max}}\Bigg( \mathbb{E}\Big(\frac{1}{ \alpha_{it}^{2}}\Big)-\mathbb{E}\Big(\frac{1}{ \alpha_{it}^{1}}\Big|)\Bigg) &\geq& 0 \\
    \mathbb{E}\Big(\frac{\hat{\theta}_{it}^{1}}{ \alpha_{it}^{1}}\Big)-\mathbb{E}\Big(\frac{\hat{\theta}_{it}^{2}}{ \alpha_{it}^{2}}\Big)-\delta_{e,t}^{\text{max}}\Bigg(\mathbb{E}\Big(\frac{1}{ \alpha_{it}^{2}}\Big)-\mathbb{E}\Big(\frac{1}{ \alpha_{it}^{1}}\Big) \Bigg) &\leq& 0
\end{eqnarray*}
The condition that would then be consistent with an empty identified set would be: 
\begin{eqnarray*}
\mathbb{E}\Big(\frac{\hat{\theta}_{it}^{1}}{ \alpha_{it}^{1}}\Big)- \delta_{e,t}^{\text{max}}\mathbb{E}\Big(\frac{1}{ \alpha_{it}^{1}}\Big)- \delta_{m,t}^{\text{max}}> \mathbb{E}\Big(\frac{\hat{\theta}_{it}^{1}}{ \alpha_{it}^{1}}\Big)+\delta_{e,t}^{\text{max}}\mathbb{E}\Big(\frac{1}{ \alpha_{it}^{1}}\Big)+ \delta_{m,t}^{\text{max}}  
\end{eqnarray*}
From operating with the previous we get: 
\begin{eqnarray*}
\delta_{e,t}^{\text{max}}\mathbb{E}\Big(\frac{1}{ \alpha_{it}^{1}}\Big) < - \delta_{m,t}^{\text{max}}  
\end{eqnarray*}
But note that $\delta_{e,t}^{\text{max}}\geq 0, -\delta_{m,t}^{\text{max}}\leq 0, \mathbb{E}\Big(\frac{1}{ \alpha_{it}^{1}}\Big)> 0$ thus it can never happen that $\delta_{e,t}^{\text{max}}\mathbb{E}\Big(\frac{1}{ \alpha_{it}^{1}}\Big) < - \delta_{m,t}^{\text{max}} $
Thus, no rejection is possible in this scenario. Thus $FF=\emptyset$ in this case.  
\par 
\textbf{Case 2:} 
$$\max\{0, \max_{j=1,2}\text{LB}_j(\boldsymbol{\delta}_t^{\max})\}=\mathbb{E}\Big(\frac{\hat{\theta}_{it}^{2}}{ \alpha_{it}^{2}}\Big)- \delta_{e,t}^{\text{max}}\mathbb{E}\Big(\frac{1}{ \alpha_{it}^{2}}\Big)- \delta_{m,t}^{\text{max}}$$
$$ \min_{j=1,2}\text{UB}_j(\boldsymbol{\delta}_t^{\max})=\mathbb{E}\Big(\frac{\hat{\theta}_{it}^{2}}{ \alpha_{it}^{2}}\Big)+\delta_{e,t}^{\text{max}}\mathbb{E}\Big(\frac{1}{ \alpha_{it}^{2}}\Big)+ \delta_{m,t}^{\text{max}}$$
It should be noted that this happens when: 
\begin{eqnarray*}
    \mathbb{E}\Big(\frac{\hat{\theta}_{it}^{2}}{ \alpha_{it}^{2}}\Big)- \delta_{e,t}^{\text{max}}\mathbb{E}\Big(\frac{1}{ \alpha_{it}^{2}}\Big) &\geq& \delta_{m,t}^{\text{max}}\\
    \mathbb{E}\Big(\frac{\hat{\theta}_{it}^{1}}{ \alpha_{it}^{1}}\Big)-\mathbb{E}\Big(\frac{\hat{\theta}_{it}^{2}}{ \alpha_{it}^{2}}\Big)+ \delta_{e,t}^{\text{max}}\Bigg( \mathbb{E}\Big(\frac{1}{ \alpha_{it}^{2}}\Big)-\mathbb{E}\Big(\frac{1}{ \alpha_{it}^{1}}\Big)\Bigg) &\leq& 0 \\
    \mathbb{E}\Big(\frac{\hat{\theta}_{it}^{1}}{ \alpha_{it}^{1}}\Big)-\mathbb{E}\Big(\frac{\hat{\theta}_{it}^{2}}{ \alpha_{it}^{2}}\Big)-\delta_{e,t}^{\text{max}}\Bigg(\mathbb{E}\Big(\frac{1}{ \alpha_{it}^{2}}\Big)-\mathbb{E}\Big(\frac{1}{ \alpha_{it}^{1}}\Big) \Bigg) &\geq& 0
\end{eqnarray*}
By a similar display as in Case 1 we would never get an empty-set in this scenario. Thus $FF=\emptyset$ in this case too.  
\par 
\textbf{Case 3:}
$$\max\{0, \max_{j=1,2}\text{LB}_j(\boldsymbol{\delta}_t^{\max})\}=0$$
$$ \min_{j=1,2}\text{UB}_j(\boldsymbol{\delta}_t^{\max})=\mathbb{E}\Big(\frac{\hat{\theta}_{it}^{1}}{ \alpha_{it}^{1}}\Big)+\delta_{e,t}^{\text{max}}\mathbb{E}\Big(\frac{1}{ \alpha_{it}^{1}}\Big)+ \delta_{m,t}^{\text{max}}$$
This happens when:
\begin{eqnarray*}
    \mathbb{E}\Big(\frac{\hat{\theta}_{it}^{1}}{ \alpha_{it}^{1}}\Big)- \delta_{e,t}^{\text{max}}\mathbb{E}\Big(\frac{1}{ \alpha_{it}^{1}}\Big) &\leq& \delta_{m,t}^{\text{max}}\\
\mathbb{E}\Big(\frac{\hat{\theta}_{it}^{2}}{ \alpha_{it}^{2}}\Big)- \delta_{e,t}^{\text{max}}\mathbb{E}\Big(\frac{1}{ \alpha_{it}^{2}}\Big) &\leq& \delta_{m,t}^{\text{max}}\\
    \mathbb{E}\Big(\frac{\hat{\theta}_{it}^{1}}{ \alpha_{it}^{1}}\Big)-\mathbb{E}\Big(\frac{\hat{\theta}_{it}^{2}}{ \alpha_{it}^{2}}\Big)-\delta_{e,t}^{\text{max}}\Bigg(\mathbb{E}\Big(\frac{1}{ \alpha_{it}^{2}}\Big)-\mathbb{E}\Big(\frac{1}{ \alpha_{it}^{1}}\Big) \Bigg) &\leq& 0
\end{eqnarray*}
The condition that would then be consistent with an empty  identified set would be: 
\begin{eqnarray*}
    0 &>& \mathbb{E}\Big(\frac{\hat{\theta}_{it}^{1}}{ \alpha_{it}^{1}}\Big)+\delta_{e,t}^{\text{max}}\mathbb{E}\Big(\frac{1}{ \alpha_{it}^{1}}\Big)+ \delta_{m,t}^{\text{max}}  \\
    &\iff& \\
    \delta_{m,t}^{\text{max}} &<& -\mathbb{E}\Big(\frac{\hat{\theta}_{it}^{1}}{ \alpha_{it}^{1}}\Big)-\delta_{e,t}^{\text{max}}\mathbb{E}\Big(\frac{1}{ \alpha_{it}^{1}}\Big)
\end{eqnarray*}
Importantly $\delta_{m,t}^{\text{max}}, \theta_t^{1}, \mathbb{E}\Big(\frac{1}{ \alpha_{it}^{1}}\Big), \delta_{e,t}^{\text{max}}, \mathbb{E}\Big(\frac{1}{ \alpha_{it}^{1}}\Big)$ are all greater or equal to $0$ thus the previous can never happen. Then $FF=\emptyset$ in this case. 
\par 
\textbf{Case 4:}
$$\max\{0, \max_{j=1,2}\text{LB}_j(\boldsymbol{\delta}_t^{\max})\}=0$$
$$ \min_{j=1,2}\text{UB}_j(\boldsymbol{\delta}_t^{\max})=\mathbb{E}\Big(\frac{\hat{\theta}_{it}^{2}}{ \alpha_{it}^{2}}\Big)+\delta_{e,t}^{\text{max}}\mathbb{E}\Big(\frac{1}{ \alpha_{it}^{2}}\Big)+ \delta_{m,t}^{\text{max}}$$
This happens when:
\begin{eqnarray*}
    \mathbb{E}\Big(\frac{\hat{\theta}_{it}^{1}}{ \alpha_{it}^{1}}\Big)- \delta_{e,t}^{\text{max}}\mathbb{E}\Big(\frac{1}{ \alpha_{it}^{1}}\Big) &\leq& \delta_{m,t}^{\text{max}}\\
\mathbb{E}\Big(\frac{\hat{\theta}_{it}^{2}}{ \alpha_{it}^{2}}\Big)- \delta_{e,t}^{\text{max}}\mathbb{E}\Big(\frac{1}{ \alpha_{it}^{2}}\Big) &\leq& \delta_{m,t}^{\text{max}}\\
    \mathbb{E}\Big(\frac{\hat{\theta}_{it}^{1}}{ \alpha_{it}^{1}}\Big)-\mathbb{E}\Big(\frac{\hat{\theta}_{it}^{2}}{ \alpha_{it}^{2}}\Big)-\delta_{e,t}^{\text{max}}\Bigg(\mathbb{E}\Big(\frac{1}{ \alpha_{it}^{2}}\Big)-\mathbb{E}\Big(\frac{1}{ \alpha_{it}^{1}}\Big) \Bigg) &\geq& 0
\end{eqnarray*}
Similar to Case 3, $FF=\emptyset$. 
\par
\textbf{Case 5:}
$$\max\{0, \max_{j=1,2}\text{LB}_j(\boldsymbol{\delta}_t^{\max})\}=\mathbb{E}\Big(\frac{\hat{\theta}_{it}^{1}}{ \alpha_{it}^{1}}\Big)- \delta_{e,t}^{\text{max}}\mathbb{E}\Big(\frac{1}{ \alpha_{it}^{1}}\Big)- \delta_{m,t}^{\text{max}}$$
$$ \min_{j=1,2}\text{UB}_j(\boldsymbol{\delta}_t^{\max})=\mathbb{E}\Big(\frac{\hat{\theta}_{it}^{2}}{ \alpha_{it}^{2}}\Big)+\delta_{e,t}^{\text{max}}\mathbb{E}\Big(\frac{1}{ \alpha_{it}^{2}}\Big)+ \delta_{m,t}^{\text{max}}$$
Observe that this happens when 
\begin{eqnarray*}
    \mathbb{E}\Big(\frac{\hat{\theta}_{it}^{1}}{ \alpha_{it}^{1}}\Big)- \delta_{e,t}^{\text{max}}\mathbb{E}\Big(\frac{1}{ \alpha_{it}^{1}}\Big) &\geq& \delta_{m,t}^{\text{max}}\\
    \mathbb{E}\Big(\frac{\hat{\theta}_{it}^{1}}{ \alpha_{it}^{1}}\Big)-\mathbb{E}\Big(\frac{\hat{\theta}_{it}^{2}}{ \alpha_{it}^{2}}\Big)+ \delta_{e,t}^{\text{max}}\Bigg( \mathbb{E}\Big(\frac{1}{ \alpha_{it}^{2}}\Big)-\mathbb{E}\Big(\frac{1}{ \alpha_{it}^{1}}\Big)\Bigg) &\geq& 0 \\
    \mathbb{E}\Big(\frac{\hat{\theta}_{it}^{1}}{ \alpha_{it}^{1}}\Big)-\mathbb{E}\Big(\frac{\hat{\theta}_{it}^{2}}{ \alpha_{it}^{2}}\Big)-\delta_{e,t}^{\text{max}}\Bigg(\mathbb{E}\Big(\frac{1}{ \alpha_{it}^{2}}\Big)-\mathbb{E}\Big(\frac{1}{ \alpha_{it}^{1}}\Big) \Bigg) &\geq& 0
\end{eqnarray*}
The condition that would then be consistent with an empty identified set would be: 
\begin{eqnarray*}
\mathbb{E}\Big(\frac{\hat{\theta}_{it}^{1}}{ \alpha_{it}^{1}}\Big)- \delta_{e,t}^{\text{max}}\mathbb{E}\Big(\frac{1}{ \alpha_{it}^{1}}\Big)- \delta_{m,t}^{\text{max}}&>& \mathbb{E}\Big(\frac{\hat{\theta}_{it}^{2}}{ \alpha_{it}^{2}}\Big)+\delta_{e,t}^{\text{max}}\mathbb{E}\Big(\frac{1}{ \alpha_{it}^{2}}\Big)+ \delta_{m,t}^{\text{max}}
\end{eqnarray*}
Or equivalently: 
\begin{eqnarray*}
    \delta_{m,t}^{\text{max}}  &<& \frac{\mathbb{E}\Big(\frac{\hat{\theta}_{it}^{1}}{ \alpha_{it}^{1}}\Big)-\mathbb{E}\Big(\frac{\hat{\theta}_{it}^{2}}{ \alpha_{it}^{2}}\Big)-  \delta_{e,t}^{\text{max}} \Bigg(\mathbb{E}\Big(\frac{1}{ \alpha_{it}^{1}}\Big)+ \mathbb{E}\Big(\frac{1}{ \alpha_{it}^{2}}\Big) \Bigg)}{2}
\end{eqnarray*}

This implies that the falsification frontier is given by the first non-falsified model:

\begin{eqnarray*}
 FF=\Bigg\{ \delta_{m,t}^{\text{max}}, \delta_{e,t}^{\text{max}} \in \mathbb{R}_+^2: \delta_{m,t}^{\text{max}}  = \frac{\mathbb{E}\Big(\frac{\hat{\theta}_{it}^{1}}{ \alpha_{it}^{1}}\Big)-\mathbb{E}\Big(\frac{\hat{\theta}_{it}^{2}}{ \alpha_{it}^{2}}\Big)-  \delta_{e,t}^{\text{max}} \Bigg(\mathbb{E}\Big(\frac{1}{ \alpha_{it}^{1}}\Big)+ \mathbb{E}\Big(\frac{1}{ \alpha_{it}^{2}}\Big) \Bigg)}{2} \Bigg\}   
\end{eqnarray*}
\par
\textbf{Case 6:}
$$\max\{0, \max_{j=1,2}\text{LB}_j(\boldsymbol{\delta}_t^{\max})\}=\mathbb{E}\Big(\frac{\hat{\theta}_{it}^{2}}{ \alpha_{it}^{2}}\Big)- \delta_{e,t}^{\text{max}}\mathbb{E}\Big(\frac{1}{ \alpha_{it}^{2}}\Big)- \delta_{m,t}^{\text{max}}$$
$$ \min_{j=1,2}\text{UB}_j(\boldsymbol{\delta}_t^{\max})=\mathbb{E}\Big(\frac{\hat{\theta}_{it}^{1}}{ \alpha_{it}^{1}}\Big)+\delta_{e,t}^{\text{max}}\mathbb{E}\Big(\frac{1}{ \alpha_{it}^{1}}\Big)+ \delta_{m,t}^{\text{max}}$$
Observe that this happens when 
\begin{eqnarray*}
    \mathbb{E}\Big(\frac{\hat{\theta}_{it}^{2}}{ \alpha_{it}^{2}}\Big)- \delta_{e,t}^{\text{max}}\mathbb{E}\Big(\frac{1}{ \alpha_{it}^{2}}\Big) &\geq& \delta_{m,t}^{\text{max}}\\
    \mathbb{E}\Big(\frac{\hat{\theta}_{it}^{1}}{ \alpha_{it}^{1}}\Big)-\mathbb{E}\Big(\frac{\hat{\theta}_{it}^{2}}{ \alpha_{it}^{2}}\Big)+ \delta_{e,t}^{\text{max}}\Bigg( \mathbb{E}\Big(\frac{1}{ \alpha_{it}^{2}}\Big)-\mathbb{E}\Big(\frac{1}{ \alpha_{it}^{1}}\Big)\Bigg) &\leq& 0 \\
    \mathbb{E}\Big(\frac{\hat{\theta}_{it}^{1}}{ \alpha_{it}^{1}}\Big)-\mathbb{E}\Big(\frac{\hat{\theta}_{it}^{2}}{ \alpha_{it}^{2}}\Big)-\delta_{e,t}^{\text{max}}\Bigg(\mathbb{E}\Big(\frac{1}{ \alpha_{it}^{2}}\Big)-\mathbb{E}\Big(\frac{1}{ \alpha_{it}^{1}}\Big) \Bigg) &\leq& 0
\end{eqnarray*}
Then by a similar logic as in Case 5: 
\begin{eqnarray*}
 FF=\Bigg\{ \delta_{m,t}^{\text{max}}, \delta_{e,t}^{\text{max}} \in \mathbb{R}_+^2: \delta_{m,t}^{\text{max}}  = \frac{\mathbb{E}\Big(\frac{\hat{\theta}_{it}^{2}}{ \alpha_{it}^{2}}\Big)-\mathbb{E}\Big(\frac{\hat{\theta}_{it}^{1}}{ \alpha_{it}^{1}}\Big)-  \delta_{e,t}^{\text{max}} \Bigg(\mathbb{E}\Big(\frac{1}{ \alpha_{it}^{2}}\Big)+ \mathbb{E}\Big(\frac{1}{ \alpha_{it}^{1}}\Big) \Bigg)}{2} \Bigg\}   
\end{eqnarray*}

\cite{masten2021salvaging} define the falsification frontier as the set of assumptions that are not falsified for a given $\boldsymbol{\delta}_t$, but whose strengthening in any component yields a falsified model. Because Cases~1--4 exhibit no falsification and thus lack a falsification frontier, the overall frontier equals the union of the frontiers from the final two cases, which do possess falsification conditions.

\end{proof}

\begin{proof}[Proof of Corollary \ref{Corollary2}]
 With only one input the average markup can be bounded by:
\begin{eqnarray*}
\Bigg(\max\{0,\mathbb{E}\Big(\frac{\hat{\theta}_{it}^{1}}{ \alpha_{it}^{1}}\Big)- \delta_{e,t}^{\text{max}}\mathbb{E}\Big(\frac{1}{ \alpha_{it}^{1}}\Big)- \delta_{m,t}^{\text{max}}\},\mathbb{E}\Big(\frac{\hat{\theta}_{it}^{1}}{ \alpha_{it}^{1}}\Big)+\delta_{e,t}^{\text{max}}\mathbb{E}\Big(\frac{1}{ \alpha_{it}^{1}}\Big)+ \delta_{m,t}^{\text{max}}\Bigg)
\end{eqnarray*}
 This is an application of Proposition \ref{Proposition1} with only one variable input. 
\par 
By a similar logic as in Proposition \ref{Proposition2} we have  cases. 
\par 
\begin{enumerate}
    \item \textbf{When } $\mathbb{E}\Big(\frac{\hat{\theta}_{it}^{1}}{ \alpha_{it}^{1}}\Big)- \delta_{e,t}^{\text{max}}\mathbb{E}\Big(\frac{1}{ \alpha_{it}^{1}}\Big)- \delta_{m,t}^{\text{max}}>0$
In this case the bounds look like: 

\begin{eqnarray*}
\Bigg(\mathbb{E}\Big(\frac{\hat{\theta}_{it}^{1}}{ \alpha_{it}^{1}}\Big)- \delta_{e,t}^{\text{max}}\mathbb{E}\Big(\frac{1}{ \alpha_{it}^{1}}\Big)- \delta_{m,t}^{\text{max}},\mathbb{E}\Big(\frac{\hat{\theta}_{it}^{1}}{ \alpha_{it}^{1}}\Big)+\delta_{e,t}^{\text{max}}\mathbb{E}\Big(\frac{1}{ \alpha_{it}^{1}}\Big)+ \delta_{m,t}^{\text{max}}\Bigg)
\end{eqnarray*}
Thus the model would be faslified if: 
\begin{eqnarray*}
2(- \delta_{e,t}^{\text{max}}\mathbb{E}\Big(\frac{1}{ \alpha_{it}^{1}}\Big)- \delta_{m,t}^{\text{max}}) >0 
\end{eqnarray*}
Which never happens since  $\delta_{e,t}^{\text{max}},\mathbb{E}\Big(\frac{1}{ \alpha_{it}^{1}}\Big), \delta_{m,t}^{\text{max}}$ are nonnegative.  Thus no falsification would be possible in this case.
    \item \textbf{When} $\mathbb{E}\Big(\frac{\hat{\theta}_{it}^{1}}{ \alpha_{it}^{1}}\Big)- \delta_{e,t}^{\text{max}}\mathbb{E}\Big(\frac{1}{ \alpha_{it}^{1}}\Big)- \delta_{m,t}^{\text{max}}\leq 0$
In this case the bounds look like: 
\begin{eqnarray*}
\Bigg(0,\mathbb{E}\Big(\frac{\hat{\theta}_{it}^{1}}{ \alpha_{it}^{1}}\Big)+\delta_{e,t}^{\text{max}}\mathbb{E}\Big(\frac{1}{ \alpha_{it}^{1}}\Big)+ \delta_{m,t}^{\text{max}}\Bigg)
\end{eqnarray*}
Thus the model would be faslified if: 
$$\mathbb{E}\Big(\frac{\hat{\theta}_{it}^{1}}{ \alpha_{it}^{1}}\Big)+\delta_{e,t}^{\text{max}}\mathbb{E}\Big(\frac{1}{ \alpha_{it}^{1}}\Big)+ \delta_{m,t}^{\text{max}}<0  $$
Which never happens since $\mathbb{E}\Big(\frac{\hat{\theta}_{it}^{1}}{ \alpha_{it}^{1}}\Big),\delta_{e,t}^{\text{max}}, \mathbb{E}\Big(\frac{1}{ \alpha_{it}^{1}}\Big), \delta_{m,t}^{\text{max}}$ are all positive numbers. 
Thus no falsification would be possible in this case either
\end{enumerate} 
\end{proof}

\begin{proof}[Proof of Theorem \ref{Thm1}]
The FAS will be the union across values of $\boldsymbol{\delta}_t$ in the FF. Since we have different bounds that imply different frontiers we will compute the  FAS associated with each of the possible bounds and their associated FF. Since we have empty falsification frontiers for several possible bound(some of them are not falsifiable) except for two of them the FAS will be the union across the $\boldsymbol{\delta}_t$ consistent with the two bounds ( the ones that are indeed falsifiable).

 Observe that for the case that: \begin{eqnarray}\label{Eq0Thm1}
    \mathbb{E}\Big(\frac{\hat{\theta}_{it}^{1}}{ \alpha_{it}^{1}}\Big)- \delta_{e,t}^{\text{max}}\mathbb{E}\Big(\frac{1}{ \alpha_{it}^{1}}\Big) &\geq& \delta_{m,t}^{\text{max}}\nonumber \\
    \mathbb{E}\Big(\frac{\hat{\theta}_{it}^{1}}{ \alpha_{it}^{1}}\Big)-\mathbb{E}\Big(\frac{\hat{\theta}_{it}^{2}}{ \alpha_{it}^{2}}\Big)+ \delta_{e,t}^{\text{max}}\Bigg( \mathbb{E}\Big(\frac{1}{ \alpha_{it}^{2}}\Big)-\mathbb{E}\Big(\frac{1}{ \alpha_{it}^{1}}\Big)\Bigg) &\geq& 0 \nonumber \\
    \mathbb{E}\Big(\frac{\hat{\theta}_{it}^{1}}{ \alpha_{it}^{1}}\Big)-\mathbb{E}\Big(\frac{\hat{\theta}_{it}^{2}}{ \alpha_{it}^{2}}\Big)-\delta_{e,t}^{\text{max}}\Bigg(\mathbb{E}\Big(\frac{1}{ \alpha_{it}^{2}}\Big)-\mathbb{E}\Big(\frac{1}{ \alpha_{it}^{1}}\Big) \Bigg) &\geq& 0 \nonumber \\
\end{eqnarray}
For an element of $FF(\boldsymbol{\delta}_t)$ say $\delta_{m,t}^{\text{max}, FF}, \delta_{e,t}^{\text{max}, FF}$ we have: 
\begin{eqnarray}\label{Eq1Thm1}
\max\{0,\max_{j=1,2}LB_{\delta^{FF},m,t}\}&=&\min_{j=1,2} = UB_{\delta^{FF},m,t}    \nonumber \\
&=& \mathbb{E}\Big(\frac{\hat{\theta}_{it}^{2}}{ \alpha_{it}^{2}}\Big)+\delta_{e,t}^{\text{max}, FF}\mathbb{E}\Big(\frac{1}{ \alpha_{it}^{2}}\Big)+ \delta_{m,t}^{\text{max}, FF} \nonumber \\
&=& \mathbb{E}\Big(\frac{\hat{\theta}_{it}^{2}}{ \alpha_{it}^{2}}\Big)+\delta_{e,t}^{\text{max}, FF}\mathbb{E}\Big(\frac{1}{ \alpha_{it}^{2}}\Big) \nonumber \\
&+& \frac{\mathbb{E}\Big(\frac{\hat{\theta}_{it}^{1}}{ \alpha_{it}^{1}}\Big)-\mathbb{E}\Big(\frac{\hat{\theta}_{it}^{2}}{ \alpha_{it}^{2}}\Big)-  \delta_{e,t}^{\text{max}, FF} \Bigg(\mathbb{E}\Big(\frac{1}{ \alpha_{it}^{1}}\Big)+ \mathbb{E}\Big(\frac{1}{ \alpha_{it}^{2}}\Big) \Bigg)}{2} \nonumber \\
&=& \frac{\mathbb{E}\Big(\frac{\hat{\theta}_{it}^{1}}{ \alpha_{it}^{1}}\Big)+\mathbb{E}\Big(\frac{\hat{\theta}_{it}^{2}}{ \alpha_{it}^{2}}\Big)+\delta_{e,t}^{\text{max}, FF}\Big[\mathbb{E}\Big(\frac{1}{ \alpha_{it}^{2}}\Big)-\mathbb{E}\Big(\frac{1}{ \alpha_{it}^{1}}\Big) \Big]}{2} \nonumber \\
\end{eqnarray}
Where we have substituted $\delta_{m,t}^{\text{max}, FF}$ with its relationship with $\delta_{e,t}^{\text{max}, FF}$ across the $FF$ and the fact that in $FF$ the upper bound is the same as the lower bound. It is worth noting that from Equation \ref{Eq0Thm1} for $\delta_{e,t}^{\text{max}},\delta_{m,t}^{\text{max}}$ in $FF$ we know that Equation \ref{Eq1Thm1} is non-negative (in particular from the first inequality). For an element of $FF(\boldsymbol{\delta}_t)$, say $\delta_{m,t}^{\text{max}, FF}$ and $\delta_{e,t}^{\text{max}, FF}$, we have: 
\begin{align}\label{Eq1Thm1}
\max\{0, \max_{j=1,2} LB_{\delta^{FF},m,t}\} 
&= \min_{j=1,2} UB_{\delta^{FF},m,t} \nonumber \\
&= \mathbb{E}\Big(\frac{\hat{\theta}_{it}^{2}}{ \alpha_{it}^{2}}\Big) + \delta_{e,t}^{\text{max}, FF} \mathbb{E}\Big(\frac{1}{ \alpha_{it}^{2}}\Big) + \delta_{m,t}^{\text{max}, FF} \nonumber \\
&= \mathbb{E}\Big(\frac{\hat{\theta}_{it}^{2}}{ \alpha_{it}^{2}}\Big) + \delta_{e,t}^{\text{max}, FF} \mathbb{E}\Big(\frac{1}{ \alpha_{it}^{2}}\Big) \nonumber \\
&\quad + \frac{\mathbb{E}\Big(\frac{\hat{\theta}_{it}^{1}}{ \alpha_{it}^{1}}\Big) - \mathbb{E}\Big(\frac{\hat{\theta}_{it}^{2}}{ \alpha_{it}^{2}}\Big) - \delta_{e,t}^{\text{max}, FF} \Bigg( \mathbb{E}\Big(\frac{1}{ \alpha_{it}^{1}}\Big) + \mathbb{E}\Big(\frac{1}{ \alpha_{it}^{2}}\Big) \Bigg)}{2} \nonumber \\
&= \frac{\mathbb{E}\Big(\frac{\hat{\theta}_{it}^{1}}{ \alpha_{it}^{1}}\Big) + \mathbb{E}\Big(\frac{\hat{\theta}_{it}^{2}}{ \alpha_{it}^{2}}\Big) + \delta_{e,t}^{\text{max}, FF} \Big[ \mathbb{E}\Big(\frac{1}{ \alpha_{it}^{2}}\Big) - \mathbb{E}\Big(\frac{1}{ \alpha_{it}^{1}}\Big) \Big]}{2}
\end{align}
Where we have substituted $\delta_{m,t}^{\text{max}, FF}$ with its relationship with $\delta_{e,t}^{\text{max}, FF}$ across the $FF$, and the fact that in $FF$ the upper bound is the same as the lower bound. It is worth noting that from Equation \ref{Eq0Thm1} for $\delta_{e,t}^{\text{max}}, \delta_{m,t}^{\text{max}}$ in $FF$, we know that Equation \ref{Eq1Thm1} is non-negative (in particular from the first inequality).

\par 
Similarly in the case that: 
\begin{eqnarray}\label{Eq0bThm1}
    \mathbb{E}\Big(\frac{\hat{\theta}_{it}^{2}}{ \alpha_{it}^{2}}\Big)- \delta_{e,t}^{\text{max}}\mathbb{E}\Big(\frac{1}{ \alpha_{it}^{2}}\Big) &\geq& \delta_{m,t}^{\text{max}}\nonumber \\
    \mathbb{E}\Big(\frac{\hat{\theta}_{it}^{1}}{ \alpha_{it}^{1}}\Big)-\mathbb{E}\Big(\frac{\hat{\theta}_{it}^{2}}{ \alpha_{it}^{2}}\Big)+ \delta_{e,t}^{\text{max}}\Bigg( \mathbb{E}\Big(\frac{1}{ \alpha_{it}^{2}}\Big)-\mathbb{E}\Big(\frac{1}{ \alpha_{it}^{1}}\Big)\Bigg) &\leq& 0 \nonumber \\
    \mathbb{E}\Big(\frac{\hat{\theta}_{it}^{1}}{ \alpha_{it}^{1}}\Big)-\mathbb{E}\Big(\frac{\hat{\theta}_{it}^{2}}{ \alpha_{it}^{2}}\Big)-\delta_{e,t}^{\text{max}}\Bigg(\mathbb{E}\Big(\frac{1}{ \alpha_{it}^{2}}\Big)-\mathbb{E}\Big(\frac{1}{ \alpha_{it}^{1}}\Big) \Bigg) &\leq& 0 \nonumber \\
\end{eqnarray}
We have: 
\begin{eqnarray}\label{Eq2Thm1}
\max\{0,\max_{j=1,2}LB_{\delta^{FF},m,t}\}&=&\min_{j=1,2} = UB_{\delta^{FF},m,t}    \nonumber \\
&=& \frac{\mathbb{E}\Big(\frac{\hat{\theta}_{it}^{1}}{ \alpha_{it}^{1}}\Big)+\mathbb{E}\Big(\frac{\hat{\theta}_{it}^{2}}{ \alpha_{it}^{2}}\Big)-\delta_{e,t}^{\text{max}, FF}\Big[\mathbb{E}\Big(\frac{1}{ \alpha_{it}^{2}}\Big)-\mathbb{E}\Big(\frac{1}{ \alpha_{it}^{1}}\Big) \Big]}{2} \nonumber \\
\end{eqnarray}
Similarly as above  Equation \ref{Eq2Thm1} is non-negative. 
\par 
These two are the bounds in the $FF$ for the cases among the possible combinations of bounds that have a non empty $FF$.

Since we imposed in the two bounds  being in the $FF$  as a function of $\delta_{e,t}^{\text{max}, FF}$ we now need to calculate the union of the points defined by Equations \ref{Eq1Thm1}  and \ref{Eq2Thm1} across the different possible values of $\delta_{e,t}^{\text{max}, FF} \in [0,+\infty)$. 

The behavior of the union will depend on what happens with the sign of $\mathbb{E}\Big(\frac{1}{ \alpha_{it}^{2}}\Big)-\mathbb{E}\Big(\frac{1}{ \alpha_{it}^{1}}\Big)$ in connection to the conditions from Equations \ref{Eq0Thm1} and \ref{Eq0bThm1} and with the sign of $\mathbb{E}\Big(\frac{\hat{\theta}_{it}^{1}}{ \alpha_{it}^{1}}\Big)-\mathbb{E}\Big(\frac{\hat{\theta}_{it}^{2}}{ \alpha_{it}^{2}}\Big)$. These will determine both if the union across the $\boldsymbol{\delta}_t$ is an increasing or decreasing sequence (due to $\mathbb{E}\Big(\frac{1}{ \alpha_{it}^{2}}\Big)-\mathbb{E}\Big(\frac{1}{ \alpha_{it}^{1}}\Big)$ which multiplies $\delta_{m,t}^{\text{max}}$) and which world we are in (which of the two bounds of the falsifiable bounds are the ones that we should consider). 
\par 
Now note that when  $\mathbb{E}\Big(\frac{1}{ \alpha_{it}^{2}}\Big)-\mathbb{E}\Big(\frac{1}{ \alpha_{it}^{1}}\Big)\leq 0$ we have  $ \delta_{e,t}^{\text{max}, FF}\Big[\mathbb{E}\Big(\frac{1}{ \alpha_{it}^{2}}\Big)-\mathbb{E}\Big(\frac{1}{ \alpha_{it}^{1}}\Big) \Big]\leq 0$  and vice-versa when  $\mathbb{E}\Big(\frac{1}{ \alpha_{it}^{2}}\Big)-\mathbb{E}\Big(\frac{1}{ \alpha_{it}^{1}}\Big)> 0$. 
\par 
\begin{enumerate}
    \item  \textbf{We first discuss the case when $\mathbb{E}\Big(\frac{1}{ \alpha_{it}^{2}}\Big)-\mathbb{E}\Big(\frac{1}{ \alpha_{it}^{1}}\Big)\leq 0$ and    $\mathbb{E}\Big(\frac{\hat{\theta}_{it}^{1}}{ \alpha_{it}^{1}}\Big)-\mathbb{E}\Big(\frac{\hat{\theta}_{it}^{2}}{ \alpha_{it}^{2}}\Big)\geq 0$} 
In this case, when $\delta_{e,t}^{\text{max}, FF}=0$ we have from equation \ref{Eq0Thm1}:  
\begin{eqnarray*}
\frac{\mathbb{E}\Big(\frac{\hat{\theta}_{it}^{1}}{ \alpha_{it}^{1}}\Big)+\mathbb{E}\Big(\frac{\hat{\theta}_{it}^{2}}{ \alpha_{it}^{2}}\Big)}{2} &\geq& 0 \\
    \mathbb{E}\Big(\frac{\hat{\theta}_{it}^{1}}{ \alpha_{it}^{1}}\Big)-\mathbb{E}\Big(\frac{\hat{\theta}_{it}^{2}}{ \alpha_{it}^{2}}\Big) &\geq& 0 \nonumber \\
    \mathbb{E}\Big(\frac{\hat{\theta}_{it}^{1}}{ \alpha_{it}^{1}}\Big)-\mathbb{E}\Big(\frac{\hat{\theta}_{it}^{2}}{ \alpha_{it}^{2}}\Big) &\geq& 0 \nonumber \\
\end{eqnarray*}
Since $\mathbb{E}\Big(\frac{\hat{\theta}_{it}^{1}}{ \alpha_{it}^{1}}\Big)-\mathbb{E}\Big(\frac{\hat{\theta}_{it}^{2}}{ \alpha_{it}^{2}}\Big)\geq 0$ all the conditions are satisfied and thus $\frac{\mathbb{E}\Big(\frac{\hat{\theta}_{it}^{1}}{ \alpha_{it}^{1}}\Big)+\mathbb{E}\Big(\frac{\hat{\theta}_{it}^{2}}{ \alpha_{it}^{2}}\Big)}{2}$ is on the FAS. 
\par 
Now pick a generic $\delta_{e,t}^{\text{max}, FF}$. For such to pick the expression from \ref{Eq1Thm1} we need to full fill: 
\begin{eqnarray}
    \frac{\mathbb{E}\Big(\frac{\hat{\theta}_{it}^{1}}{ \alpha_{it}^{1}}\Big)+\mathbb{E}\Big(\frac{\hat{\theta}_{it}^{2}}{ \alpha_{it}^{2}}\Big)+\delta_{e,t}^{\text{max}, FF}\Big[\mathbb{E}\Big(\frac{1}{ \alpha_{it}^{2}}\Big)-\mathbb{E}\Big(\frac{1}{ \alpha_{it}^{1}}\Big) \Big]}{2} &\geq&0 \nonumber \\
    \mathbb{E}\Big(\frac{\hat{\theta}_{it}^{1}}{ \alpha_{it}^{1}}\Big)-\mathbb{E}\Big(\frac{\hat{\theta}_{it}^{2}}{ \alpha_{it}^{2}}\Big)+ \delta_{e,t}^{\text{max}, FF}\Bigg( \mathbb{E}\Big(\frac{1}{ \alpha_{it}^{2}}\Big)-\mathbb{E}\Big(\frac{1}{ \alpha_{it}^{1}}\Big)\Bigg) &\geq& 0 \nonumber \\
    \mathbb{E}\Big(\frac{\hat{\theta}_{it}^{1}}{ \alpha_{it}^{1}}\Big)-\mathbb{E}\Big(\frac{\hat{\theta}_{it}^{2}}{ \alpha_{it}^{2}}\Big)-\delta_{e,t}^{\text{max}, FF}\Bigg(\mathbb{E}\Big(\frac{1}{ \alpha_{it}^{2}}\Big)-\mathbb{E}\Big(\frac{1}{ \alpha_{it}^{1}}\Big) \Bigg) &\geq& 0 \nonumber \\
\end{eqnarray}
It is worth noting that when $\mathbb{E}\Big(\frac{1}{ \alpha_{it}^{2}}\Big)-\mathbb{E}\Big(\frac{1}{ \alpha_{it}^{1}}\Big)\leq 0$ and    $\mathbb{E}\Big(\frac{\hat{\theta}_{it}^{1}}{ \alpha_{it}^{1}}\Big)-\mathbb{E}\Big(\frac{\hat{\theta}_{it}^{2}}{ \alpha_{it}^{2}}\Big)\geq 0$ there will eventually be a  $\delta_{e,t}^{\text{max}, FF}$ that will be the last one that satisfies  either the first inequality or the second inequality. Namely,
\begin{eqnarray}
\delta_{e,t}^{\text{max}, FF,b}&\equiv& \min\Bigg\{ \frac{\mathbb{E}\Big(\frac{\hat{\theta}_{it}^{1}}{ \alpha_{it}^{1}}\Big)+\mathbb{E}\Big(\frac{\hat{\theta}_{it}^{2}}{ \alpha_{it}^{2}}\Big)}{\mathbb{E}\Big(\frac{1}{ \alpha_{it}^{1}}\Big)-\mathbb{E}\Big(\frac{1}{ \alpha_{it}^{2}}\Big)}; \frac{\mathbb{E}\Big(\frac{\hat{\theta}_{it}^{1}}{ \alpha_{it}^{1}}\Big)-\mathbb{E}\Big(\frac{\hat{\theta}_{it}^{2}}{ \alpha_{it}^{2}}\Big)}{\mathbb{E}\Big(\frac{1}{ \alpha_{it}^{1}}\Big)-\mathbb{E}\Big(\frac{1}{ \alpha_{it}^{2}}\Big)} \Bigg\}  \nonumber \\ 
\end{eqnarray}
Thus, we  take the union across $\delta_{e,t}^{\text{max}, FF}$ up to $\delta_{e,t}^{\text{max}, FF,b}$ of the form of  
$$\frac{\mathbb{E}\Big(\frac{\hat{\theta}_{it}^{1}}{ \alpha_{it}^{1}}\Big)+\mathbb{E}\Big(\frac{\hat{\theta}_{it}^{2}}{ \alpha_{it}^{2}}\Big)+\delta_{e,t}^{\text{max}, FF}\Big[\mathbb{E}\Big(\frac{1}{ \alpha_{it}^{2}}\Big)-\mathbb{E}\Big(\frac{1}{ \alpha_{it}^{1}}\Big) \Big]}{2}$$ 
And then  stop since in this case  there will be no positive $\delta_{e,t}^{\text{max}, FF}$ such that satisfies the third inequality from equation \ref{Eq0bThm1}.   
Then, the falsification adaptative set is: 
\begin{eqnarray*}
 FAS&=& \Bigg(\frac{\mathbb{E}\Big(\frac{\hat{\theta}_{it}^{1}}{ \alpha_{it}^{1}}\Big)+\mathbb{E}\Big(\frac{\hat{\theta}_{it}^{2}}{ \alpha_{it}^{2}}\Big)+\delta_{e,t}^{\text{max}, FF,b}\Big[\mathbb{E}\Big(\frac{1}{ \alpha_{it}^{2}}\Big)-\mathbb{E}\Big(\frac{1}{ \alpha_{it}^{1}}\Big) \Big]}{2}\\
 &,& \frac{\mathbb{E}\Big(\frac{\hat{\theta}_{it}^{1}}{ \alpha_{it}^{1}}\Big)+\mathbb{E}\Big(\frac{\hat{\theta}_{it}^{2}}{ \alpha_{it}^{2}}\Big)}{2}\Bigg)  \\
 &=& \Bigg(\mathbb{E}\Big(\frac{\hat{\theta}_{it}^{2}}{ \alpha_{it}^{2}}\Big),\frac{\mathbb{E}\Big(\frac{\hat{\theta}_{it}^{1}}{ \alpha_{it}^{1}}\Big)+\mathbb{E}\Big(\frac{\hat{\theta}_{it}^{2}}{ \alpha_{it}^{2}}\Big)}{2}\Bigg) 
\end{eqnarray*}
\item  \textbf{When $\mathbb{E}\Big(\frac{1}{ \alpha_{it}^{2}}\Big)-\mathbb{E}\Big(\frac{1}{ \alpha_{it}^{1}}\Big)\leq 0$ and    $\mathbb{E}\Big(\frac{\hat{\theta}_{it}^{1}}{ \alpha_{it}^{1}}\Big)-\mathbb{E}\Big(\frac{\hat{\theta}_{it}^{2}}{ \alpha_{it}^{2}}\Big)< 0$} 
\par 
In this case note that when $\delta_{e,t}^{\text{max}, FF}=0$ we have from equation \ref{Eq0bThm1}:
\begin{eqnarray*}
    \frac{\mathbb{E}\Big(\frac{\hat{\theta}_{it}^{1}}{ \alpha_{it}^{1}}\Big)+\mathbb{E}\Big(\frac{\hat{\theta}_{it}^{2}}{ \alpha_{it}^{2}}\Big)}{2} &\geq& 0 \nonumber \\
\mathbb{E}\Big(\frac{\hat{\theta}_{it}^{1}}{ \alpha_{it}^{1}}\Big)-\mathbb{E}\Big(\frac{\hat{\theta}_{it}^{2}}{ \alpha_{it}^{2}}\Big) &<& 0 \nonumber \\ \mathbb{E}\Big(\frac{\hat{\theta}_{it}^{1}}{ \alpha_{it}^{1}}\Big)-\mathbb{E}\Big(\frac{\hat{\theta}_{it}^{2}}{ \alpha_{it}^{2}}\Big) &<& 0 \nonumber \\
\end{eqnarray*}
Since $\mathbb{E}\Big(\frac{\hat{\theta}_{it}^{1}}{ \alpha_{it}^{1}}\Big)-\mathbb{E}\Big(\frac{\hat{\theta}_{it}^{2}}{ \alpha_{it}^{2}}\Big)< 0$ all the conditions are satisfied and thus the first point in the union is $\frac{\mathbb{E}\Big(\frac{\hat{\theta}_{it}^{1}}{ \alpha_{it}^{1}}\Big)+\mathbb{E}\Big(\frac{\hat{\theta}_{it}^{2}}{ \alpha_{it}^{2}}\Big)}{2}$
\par 
Now pick a generic $\delta_{e,t}^{\text{max}, FF}$. For such to pick the expression from \ref{Eq2Thm1} we need to full fill: 
\begin{eqnarray*}
    \frac{\mathbb{E}\Big(\frac{\hat{\theta}_{it}^{1}}{ \alpha_{it}^{1}}\Big)+\mathbb{E}\Big(\frac{\hat{\theta}_{it}^{2}}{ \alpha_{it}^{2}}\Big)-\delta_{e,t}^{\text{max}, FF}\Big[\mathbb{E}\Big(\frac{1}{ \alpha_{it}^{2}}\Big)-\mathbb{E}\Big(\frac{1}{ \alpha_{it}^{1}}\Big) \Big]}{2} \geq 0 \nonumber \\
    \mathbb{E}\Big(\frac{\hat{\theta}_{it}^{1}}{ \alpha_{it}^{1}}\Big)-\mathbb{E}\Big(\frac{\hat{\theta}_{it}^{2}}{ \alpha_{it}^{2}}\Big)+ \delta_{e,t}^{\text{max}, FF}\Bigg( \mathbb{E}\Big(\frac{1}{ \alpha_{it}^{2}}\Big)-\mathbb{E}\Big(\frac{1}{ \alpha_{it}^{1}}\Big)\Bigg) &\leq& 0 \nonumber \\
    \mathbb{E}\Big(\frac{\hat{\theta}_{it}^{1}}{ \alpha_{it}^{1}}\Big)-\mathbb{E}\Big(\frac{\hat{\theta}_{it}^{2}}{ \alpha_{it}^{2}}\Big)-\delta_{e,t}^{\text{max}, FF}\Bigg(\mathbb{E}\Big(\frac{1}{ \alpha_{it}^{2}}\Big)-\mathbb{E}\Big(\frac{1}{ \alpha_{it}^{1}}\Big) \Bigg) &\leq& 0 \nonumber \\
\end{eqnarray*}
It is worth noting that when $\mathbb{E}\Big(\frac{1}{ \alpha_{it}^{2}}\Big)-\mathbb{E}\Big(\frac{1}{ \alpha_{it}^{1}}\Big)\leq 0$ and    $\mathbb{E}\Big(\frac{\hat{\theta}_{it}^{1}}{ \alpha_{it}^{1}}\Big)-\mathbb{E}\Big(\frac{\hat{\theta}_{it}^{2}}{ \alpha_{it}^{2}}\Big)< 0$ there will eventually be a  $\delta_{e,t}^{\text{max}, FF}$ that will be the last one that satisfies the third equality. Namely:  

\begin{eqnarray}
 \delta_{e,t}^{\text{max}, FF,b}&\equiv& \frac{\mathbb{E}\Big(\frac{\hat{\theta}_{it}^{2}}{ \alpha_{it}^{2}}\Big)-\mathbb{E}\Big(\frac{\hat{\theta}_{it}^{1}}{ \alpha_{it}^{1}}\Big)}{\mathbb{E}\Big(\frac{1}{ \alpha_{it}^{1}}\Big)-\mathbb{E}\Big(\frac{1}{ \alpha_{it}^{2}}\Big)}
\end{eqnarray}
Thus, we  take the union across $\delta_{e,t}^{\text{max}, FF}$ up to $\delta_{e,t}^{\text{max}, FF,b}$ of the form of  
$$\frac{\mathbb{E}\Big(\frac{\hat{\theta}_{it}^{1}}{ \alpha_{it}^{1}}\Big)+\mathbb{E}\Big(\frac{\hat{\theta}_{it}^{2}}{ \alpha_{it}^{2}}\Big)-\delta_{e,t}^{\text{max}, FF}\Big[\mathbb{E}\Big(\frac{1}{ \alpha_{it}^{2}}\Big)-\mathbb{E}\Big(\frac{1}{ \alpha_{it}^{1}}\Big) \Big]}{2}$$ 
And then  stop since in this case there will be no positive $\delta_{e,t}^{\text{max}, FF}$ such that satisfies the second inequality from equation \ref{Eq0Thm1}.   Then, the falsification adaptative set is: 
\begin{eqnarray*}
 FAS= \Bigg( \frac{\mathbb{E}\Big(\frac{\hat{\theta}_{it}^{1}}{ \alpha_{it}^{1}}\Big)+\mathbb{E}\Big(\frac{\hat{\theta}_{it}^{2}}{ \alpha_{it}^{2}}\Big)}{2}, \mathbb{E}\Big(\frac{\hat{\theta}_{it}^{2}}{ \alpha_{it}^{2}}\Big)\Bigg)   
\end{eqnarray*}

\item  \textbf{When $\mathbb{E}\Big(\frac{1}{ \alpha_{it}^{2}}\Big)-\mathbb{E}\Big(\frac{1}{ \alpha_{it}^{1}}\Big)> 0$ and    $\mathbb{E}\Big(\frac{\hat{\theta}_{it}^{1}}{ \alpha_{it}^{1}}\Big)-\mathbb{E}\Big(\frac{\hat{\theta}_{it}^{2}}{ \alpha_{it}^{2}}\Big)\geq 0$} 
In this case, when $\delta_{e,t}^{\text{max}, FF}=0$ we have from equation \ref{Eq0Thm1}:  
\begin{eqnarray*}
\frac{\mathbb{E}\Big(\frac{\hat{\theta}_{it}^{1}}{ \alpha_{it}^{1}}\Big)+\mathbb{E}\Big(\frac{\hat{\theta}_{it}^{2}}{ \alpha_{it}^{2}}\Big)}{2} &\geq& 0 \\
    \mathbb{E}\Big(\frac{\hat{\theta}_{it}^{1}}{ \alpha_{it}^{1}}\Big)-\mathbb{E}\Big(\frac{\hat{\theta}_{it}^{2}}{ \alpha_{it}^{2}}\Big) &\geq& 0 \nonumber \\
    \mathbb{E}\Big(\frac{\hat{\theta}_{it}^{1}}{ \alpha_{it}^{1}}\Big)-\mathbb{E}\Big(\frac{\hat{\theta}_{it}^{2}}{ \alpha_{it}^{2}}\Big) &\geq& 0 \nonumber \\
\end{eqnarray*}
Since $\mathbb{E}\Big(\frac{\hat{\theta}_{it}^{1}}{ \alpha_{it}^{1}}\Big)-\mathbb{E}\Big(\frac{\hat{\theta}_{it}^{2}}{ \alpha_{it}^{2}}\Big)\geq 0$ all the conditions are satisfied and thus $\frac{\mathbb{E}\Big(\frac{\hat{\theta}_{it}^{1}}{ \alpha_{it}^{1}}\Big)+\mathbb{E}\Big(\frac{\hat{\theta}_{it}^{2}}{ \alpha_{it}^{2}}\Big)}{2}$ is on the FAS. 
\par 
Now pick a generic $\delta_{e,t}^{\text{max, FF}}$. For such to pick the expression from \ref{Eq1Thm1} we need to full fill: 
\begin{eqnarray}
    \frac{\mathbb{E}\Big(\frac{\hat{\theta}_{it}^{1}}{ \alpha_{it}^{1}}\Big)+\mathbb{E}\Big(\frac{\hat{\theta}_{it}^{2}}{ \alpha_{it}^{2}}\Big)+\delta_{e,t}^{\text{max, FF}}\Big[\mathbb{E}\Big(\frac{1}{ \alpha_{it}^{2}}\Big)-\mathbb{E}\Big(\frac{1}{ \alpha_{it}^{1}}\Big) \Big]}{2} &\geq&0 \nonumber \\
    \mathbb{E}\Big(\frac{\hat{\theta}_{it}^{1}}{ \alpha_{it}^{1}}\Big)-\mathbb{E}\Big(\frac{\hat{\theta}_{it}^{2}}{ \alpha_{it}^{2}}\Big)+ \delta_{e,t}^{\text{max, FF}}\Bigg( \mathbb{E}\Big(\frac{1}{ \alpha_{it}^{2}}\Big)-\mathbb{E}\Big(\frac{1}{ \alpha_{it}^{1}}\Big)\Bigg) &\geq& 0 \nonumber \\
    \mathbb{E}\Big(\frac{\hat{\theta}_{it}^{1}}{ \alpha_{it}^{1}}\Big)-\mathbb{E}\Big(\frac{\hat{\theta}_{it}^{2}}{ \alpha_{it}^{2}}\Big)-\delta_{e,t}^{\text{max, FF}}\Bigg(\mathbb{E}\Big(\frac{1}{ \alpha_{it}^{2}}\Big)-\mathbb{E}\Big(\frac{1}{ \alpha_{it}^{1}}\Big) \Bigg) &\geq& 0 \nonumber \\
\end{eqnarray}
Note that when $\mathbb{E}\Big(\frac{1}{ \alpha_{it}^{2}}\Big)-\mathbb{E}\Big(\frac{1}{ \alpha_{it}^{1}}\Big)>0$ and    $\mathbb{E}\Big(\frac{\hat{\theta}_{it}^{1}}{ \alpha_{it}^{1}}\Big)-\mathbb{E}\Big(\frac{\hat{\theta}_{it}^{2}}{ \alpha_{it}^{2}}\Big)\geq 0$ there will eventually be a  $\delta_{e,t}^{\text{max, FF}}$ that will be the last one that satisfies  the third inequality. Namely,
\begin{eqnarray}
 \delta_{e,t}^{\text{max, FF, b}}&\equiv& \frac{\mathbb{E}\Big(\frac{\hat{\theta}_{it}^{1}}{ \alpha_{it}^{1}}\Big)-\mathbb{E}\Big(\frac{\hat{\theta}_{it}^{2}}{ \alpha_{it}^{2}}\Big)}{\mathbb{E}\Big(\frac{1}{ \alpha_{it}^{2}}\Big)-\mathbb{E}\Big(\frac{1}{ \alpha_{it}^{1}}\Big)}  \nonumber \\ 
\end{eqnarray}
Thus, we  take the union across $\delta_{e,t}^{\text{max, FF}}$ up to $\delta_{e,t}^{\text{max, FF, b}}$ of the form of  
$$\frac{\mathbb{E}\Big(\frac{\hat{\theta}_{it}^{1}}{ \alpha_{it}^{1}}\Big)+\mathbb{E}\Big(\frac{\hat{\theta}_{it}^{2}}{ \alpha_{it}^{2}}\Big)+\delta_{e,t}^{\text{max, FF}}\Big[\mathbb{E}\Big(\frac{1}{ \alpha_{it}^{2}}\Big)-\mathbb{E}\Big(\frac{1}{ \alpha_{it}^{1}}\Big) \Big]}{2}$$ 
And then  stop since in this case  there will be no positive $\delta_{e,t}^{\text{max, FF}}$ such that satisfies the second inequality from equation \ref{Eq0bThm1}.   
Then, the falsification adaptative set is: 
\begin{eqnarray*}
 FAS 
 &=& \Bigg(\frac{\mathbb{E}\Big(\frac{\hat{\theta}_{it}^{1}}{ \alpha_{it}^{1}}\Big)+\mathbb{E}\Big(\frac{\hat{\theta}_{it}^{2}}{ \alpha_{it}^{2}}\Big)}{2}, \mathbb{E}\Big(\frac{\hat{\theta}_{it}^{1}}{ \alpha_{it}^{1}}\Big)\Bigg) 
\end{eqnarray*}
\item  \textbf{When $\mathbb{E}\Big(\frac{1}{ \alpha_{it}^{2}}\Big)-\mathbb{E}\Big(\frac{1}{ \alpha_{it}^{1}}\Big)> 0$ and    $\mathbb{E}\Big(\frac{\hat{\theta}_{it}^{1}}{ \alpha_{it}^{1}}\Big)-\mathbb{E}\Big(\frac{\hat{\theta}_{it}^{2}}{ \alpha_{it}^{2}}\Big)< 0$} 
By symmetry we can note that: 

\begin{eqnarray*}
 FAS &=& \Bigg( \mathbb{E}\Big(\frac{\hat{\theta}_{it}^{1}}{ \alpha_{it}^{1}}\Big), \frac{\mathbb{E}\Big(\frac{\hat{\theta}_{it}^{1}}{ \alpha_{it}^{1}}\Big)+\mathbb{E}\Big(\frac{\hat{\theta}_{it}^{2}}{ \alpha_{it}^{2}}\Big)}{2}\Bigg) 
\end{eqnarray*}
\end{enumerate}
\end{proof}

\section{Results for More Than Two Variable Inputs}\label{Appmorethan2}

\subsubsection*{Variable-input index}

One alternative is to aggregate variable inputs into a small number of broad categories---for example, a materials bundle and a labor bundle---and then apply the two-input analysis developed in the main text. The relevant inputs should first be partitioned into economically meaningful groups. Next, one computes total expenditure for each category and uses the corresponding expenditure shares as weights to construct a single combined input index for each bundle. A similar aggregation can be carried out for the corresponding input prices. This approach preserves the logic of the two-input framework while reducing dimensionality.

\subsubsection*{The union of pairwise FAS submodels}

Another alternative is to compute the falsification adaptive set (FAS) for every pair of variable inputs and then take the union of the resulting pairwise sets. This procedure generally yields a wider set than the exact FAS that would be obtained from the full multi-input model, but it remains a useful and transparent robustness measure. The resulting identification region collects all average-markup values that are compatible with at least one pairwise falsification frontier.

To illustrate, suppose there are three variable inputs, $V_{it}^{1}$, $V_{it}^{2}$, and $V_{it}^{3}$, and suppose
\[
\mathbb{E}\!\left(\frac{1}{\alpha_{it}^{2}}\right)-\mathbb{E}\!\left(\frac{1}{\alpha_{it}^{1}}\right)\leq 0,
\qquad
\mathbb{E}\!\left(\frac{\hat{\theta}_{it}^{1}}{\alpha_{it}^{1}}\right)-\mathbb{E}\!\left(\frac{\hat{\theta}_{it}^{2}}{\alpha_{it}^{2}}\right)\geq 0,
\]
\[
\mathbb{E}\!\left(\frac{1}{\alpha_{it}^{3}}\right)-\mathbb{E}\!\left(\frac{1}{\alpha_{it}^{2}}\right)\leq 0,
\qquad
\mathbb{E}\!\left(\frac{\hat{\theta}_{it}^{2}}{\alpha_{it}^{2}}\right)-\mathbb{E}\!\left(\frac{\hat{\theta}_{it}^{3}}{\alpha_{it}^{3}}\right)\geq 0,
\]
and
\[
\mathbb{E}\!\left(\frac{1}{\alpha_{it}^{3}}\right)-\mathbb{E}\!\left(\frac{1}{\alpha_{it}^{1}}\right)\leq 0,
\qquad
\mathbb{E}\!\left(\frac{\hat{\theta}_{it}^{1}}{\alpha_{it}^{1}}\right)-\mathbb{E}\!\left(\frac{\hat{\theta}_{it}^{3}}{\alpha_{it}^{3}}\right)\geq 0.
\]

Then the pairwise FAS intervals are, respectively,
\[
FAS_{1,2}
=
\left(
\mathbb{E}\!\left(\frac{\hat{\theta}_{it}^{2}}{\alpha_{it}^{2}}\right),
\,
\frac{
\mathbb{E}\!\left(\frac{\hat{\theta}_{it}^{1}}{\alpha_{it}^{1}}\right)
+
\mathbb{E}\!\left(\frac{\hat{\theta}_{it}^{2}}{\alpha_{it}^{2}}\right)
}{2}
\right),
\]
\[
FAS_{2,3}
=
\left(
\mathbb{E}\!\left(\frac{\hat{\theta}_{it}^{3}}{\alpha_{it}^{3}}\right),
\,
\frac{
\mathbb{E}\!\left(\frac{\hat{\theta}_{it}^{3}}{\alpha_{it}^{3}}\right)
+
\mathbb{E}\!\left(\frac{\hat{\theta}_{it}^{2}}{\alpha_{it}^{2}}\right)
}{2}
\right),
\]
and
\[
FAS_{1,3}
=
\left(
\mathbb{E}\!\left(\frac{\hat{\theta}_{it}^{3}}{\alpha_{it}^{3}}\right),
\,
\frac{
\mathbb{E}\!\left(\frac{\hat{\theta}_{it}^{1}}{\alpha_{it}^{1}}\right)
+
\mathbb{E}\!\left(\frac{\hat{\theta}_{it}^{3}}{\alpha_{it}^{3}}\right)
}{2}
\right).
\]

If, in addition,
\[
\mathbb{E}\!\left(\frac{\hat{\theta}_{it}^{3}}{\alpha_{it}^{3}}\right)
\leq
\mathbb{E}\!\left(\frac{\hat{\theta}_{it}^{2}}{\alpha_{it}^{2}}\right)
\leq
\mathbb{E}\!\left(\frac{\hat{\theta}_{it}^{1}}{\alpha_{it}^{1}}\right),
\]
then these intervals overlap, and their union is the connected interval
\[
\bigcup_{k=1}^{3} FAS_k
=
\left(
\mathbb{E}\!\left(\frac{\hat{\theta}_{it}^{3}}{\alpha_{it}^{3}}\right),
\,
\frac{
\mathbb{E}\!\left(\frac{\hat{\theta}_{it}^{1}}{\alpha_{it}^{1}}\right)
+
\mathbb{E}\!\left(\frac{\hat{\theta}_{it}^{2}}{\alpha_{it}^{2}}\right)
}{2}
\right).
\]

More generally, let the pairwise FAS intervals be denoted by
\[
FAS_k=(FAS_{k,lb},FAS_{k,ub}),
\qquad k=1,\dots,K,
\]
where $K$ is the number of distinct input pairs. If the family $\{FAS_k\}_{k=1}^{K}$ is connected---for example, if the intervals overlap pairwise or form an overlapping chain---then
\[
\bigcup_{k=1}^{K} FAS_k
=
\left(
\min_{1\leq k\leq K} FAS_{k,lb},
\,
\max_{1\leq k\leq K} FAS_{k,ub}
\right).
\]
\section{Estimation and Inference Details}\label{AppEInf}

In this Appendix we collect results related to statistical inference for the FAS bounds.

\subsection{Hadamard directional differentiability}

In this subsection we study the differentiability properties of the maps entering our inferential procedure. Since our inferential argument relies on \cite{fang2019inference}, we begin by recalling the relevant notion of Hadamard directional differentiability.

\subsubsection*{Hadamard directional differentiability: definition and illustrative examples}

Let $D$ and $E$ be normed spaces, let $D_{\phi}\subseteq D$, and let
\[
\phi:D_{\phi}\to E.
\]
Fix $\theta\in D_{\phi}$ and let $D_{0}\subseteq D$ be a set of directions.

\begin{definition}
The map $\phi$ is said to be \emph{Hadamard directionally differentiable} at $\theta$ tangentially to $D_{0}$ if there exists a continuous map
\[
\phi'_{\theta}:D_{0}\to E
\]
such that
\[
\lim_{n\to\infty}
\left\|
\frac{\phi(\theta+t_{n}h_{n})-\phi(\theta)}{t_{n}}
-
\phi'_{\theta}(h)
\right\|_{E}
=0
\]
for every sequence $t_{n}\downarrow 0$ and every sequence $h_{n}\to h\in D_{0}$ satisfying $\theta+t_{n}h_{n}\in D_{\phi}$ for all $n$.
\end{definition}

The distinction between full Hadamard differentiability and Hadamard directional differentiability is that, in the former case, the derivative map must be linear and continuous. Proposition 2.1 in \cite{fang2019inference} shows that if the directional derivative is linear, then full Hadamard differentiability is obtained.

\paragraph{Example 1: a fully differentiable map.}
Let
\[
\phi:\mathbb{R}^{2}\to\mathbb{R},
\qquad
\phi(x,y)=x+y.
\]
Fix $\theta=(x_0,y_0)\in\mathbb{R}^{2}$ and let $h=(h_1,h_2)\in\mathbb{R}^{2}$. For any $t_n\downarrow 0$ and any $h_n=(h_{1n},h_{2n})\to (h_1,h_2)$,
\[
\frac{\phi(\theta+t_n h_n)-\phi(\theta)}{t_n}
=
\frac{(x_0+t_n h_{1n})+(y_0+t_n h_{2n})-(x_0+y_0)}{t_n}
=
h_{1n}+h_{2n}.
\]
Taking limits yields
\[
\phi'_{\theta}(h_1,h_2)=h_1+h_2.
\]
Since this derivative is linear and continuous, $\phi$ is fully Hadamard differentiable at every point.

\paragraph{Example 2: a map that is only directionally differentiable at the kink.}
Let
\[
\psi:\mathbb{R}^{2}\to\mathbb{R},
\qquad
\psi(x,y)=\max\{x,y\}.
\]
Fix $\theta=(x_0,y_0)\in\mathbb{R}^{2}$.

\medskip
\noindent
\textbf{Case 1: $x_0>y_0$.}
In a neighborhood of $\theta$, the maximum is attained by the first coordinate, so locally
\[
\psi(x,y)=x.
\]
Hence
\[
\psi'_{\theta}(h_1,h_2)=h_1.
\]
This derivative is linear, so $\psi$ is fully Hadamard differentiable at such points.

\medskip
\noindent
\textbf{Case 2: $x_0<y_0$.}
Similarly, in a neighborhood of $\theta$, the maximum is attained by the second coordinate, so
\[
\psi(x,y)=y,
\qquad
\psi'_{\theta}(h_1,h_2)=h_2.
\]
Again, $\psi$ is fully Hadamard differentiable at such points.

\medskip
\noindent
\textbf{Case 3: $x_0=y_0$.}
This is the kink point. Let $\theta=(a,a)$. Then
\[
\psi(\theta+t_n h_n)
=
\max\{a+t_n h_{1n},\,a+t_n h_{2n}\},
\]
and therefore
\[
\frac{\psi(\theta+t_n h_n)-\psi(\theta)}{t_n}
=
\max\{h_{1n},h_{2n}\}.
\]
Since $h_n\to h$, we obtain
\[
\psi'_{\theta}(h_1,h_2)=\max\{h_1,h_2\}.
\]
This derivative is continuous but not linear. Thus, at points on the diagonal $x_0=y_0$, $\psi$ is Hadamard directionally differentiable but not fully Hadamard differentiable.

\paragraph{Example 3: a map that is not even directionally differentiable at the boundary.}
Let
\[
\varphi:\mathbb{R}^{2}\to\mathbb{R},
\qquad
\varphi(x,y)=\mathbb{I}(x-y\leq 0).
\]
If $x_0<y_0$ or $x_0>y_0$, then $\varphi$ is locally constant, and hence fully Hadamard differentiable with derivative zero. However, at points on the boundary $x_0=y_0$, the map fails to be Hadamard directionally differentiable. Indeed, if $\theta=(a,a)$ and $h=(1,0)$, then for any sequence $t_n\downarrow 0$,
\[
\varphi(\theta+t_n h)=\varphi(a+t_n,a)=0,
\qquad
\varphi(\theta)=1,
\]
so
\[
\frac{\varphi(\theta+t_n h)-\varphi(\theta)}{t_n}
=
\frac{0-1}{t_n}
=
-\frac{1}{t_n}\to -\infty.
\]
Hence the difference quotient does not converge to a finite limit in $\mathbb{R}$.

\subsubsection*{Local differentiability of the FAS endpoint maps}

In our context, define
\[
\theta_t^{j,j'}
\equiv
\big(s_t^j,s_t^{j'},m_t^j,m_t^{j'}\big)\in\mathbb{R}^4,
\]
and denote by
\[
\Delta s_t^{j,j'} \equiv s_t^{j'}-s_t^j,
\qquad
\Delta m_t^{j,j'} \equiv m_t^j-m_t^{j'}.
\]

Using indicators, the lower FAS bound can be written as
\begin{align}
    FAS_{lbt}^{j,j'}
    &=
    m_{t}^{j'} \cdot \mathbb{I}\!\left( \Delta s_t^{j,j'} \leq 0 \right)\mathbb{I}\!\left( \Delta m_t^{j,j'} \geq 0 \right)
    + \left( \frac{m_{t}^{j} + m_{t}^{j'}}{2} \right) \cdot \mathbb{I}\!\left( \Delta s_t^{j,j'} \leq 0 \right)\Big( 1 - \mathbb{I}\!\left( \Delta m_t^{j,j'} \geq 0 \right)\Big) \nonumber \\
    &\quad
    + \left( \frac{m_{t}^{j} + m_{t}^{j'}}{2} \right)\cdot \Big(1 - \mathbb{I}\!\left( \Delta s_t^{j,j'} \leq 0 \right)\Big)\mathbb{I}\!\left( \Delta m_t^{j,j'} \geq 0 \right)
    + m_{t}^{j}\cdot \Big(1 - \mathbb{I}\!\left( \Delta s_t^{j,j'} \leq 0 \right)\Big)\Big(1 - \mathbb{I}\!\left( \Delta m_t^{j,j'} \geq 0 \right)\Big).
    \label{eq:FAS_lb_local}
\end{align}
The upper bound is analogously given by
\begin{align}
    FAS_{ubt}^{j,j'}
    &=
    \left( \frac{m_{t}^{j} + m_{t}^{j'}}{2} \right)\cdot \mathbb{I}\!\left( \Delta s_t^{j,j'} \leq 0 \right)\mathbb{I}\!\left( \Delta m_t^{j,j'} \geq 0 \right)
    + m_{t}^{j'}\cdot \mathbb{I}\!\left( \Delta s_t^{j,j'} \leq 0 \right)\Big( 1 - \mathbb{I}\!\left( \Delta m_t^{j,j'} \geq 0 \right)\Big) \nonumber \\
    &\quad
    + m_{t}^{j}\cdot \Big(1 - \mathbb{I}\!\left( \Delta s_t^{j,j'} \leq 0 \right)\Big)\mathbb{I}\!\left( \Delta m_t^{j,j'} \geq 0 \right)
    + \left( \frac{m_{t}^{j} + m_{t}^{j'}}{2} \right)\cdot \Big(1 - \mathbb{I}\!\left( \Delta s_t^{j,j'} \leq 0 \right)\Big)\Big(1 - \mathbb{I}\!\left( \Delta m_t^{j,j'} \geq 0 \right)\Big).
    \label{eq:FAS_ub_local}
\end{align}

As the previous example shows, the scalar indicator map fails to be Hadamard directionally differentiable at boundary points. Therefore, a global differentiability argument based directly on the indicator representation is not available. The correct argument is local.

\begin{proposition}
Fix $(j,j')$ and $t$. Suppose that at the true population value
\[
\Delta s_t^{j,j'}\neq 0
\qquad\text{and}\qquad
\Delta m_t^{j,j'}\neq 0.
\]
Then the maps
\[
\theta_t^{j,j'} \mapsto FAS_{lb}^{j,j'}
\qquad\text{and}\qquad
\theta_t^{j,j'} \mapsto FAS_{ub}^{j,j'}
\]
are fully Hadamard differentiable at $\theta_t^{j,j'}$.
\end{proposition}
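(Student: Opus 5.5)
The plan is to show that, near the true parameter, both endpoint maps coincide with a single affine function of $\theta_t^{j,j'}=(s_t^j,s_t^{j'},m_t^j,m_t^{j'})$. Hadamard differentiability then follows directly from the definition, with a linear derivative.

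\textbf{Step 1 (locally constant indicators).} The maps $\theta\mapsto\Delta s_t^{j,j'}=s_t^{j'}-s_t^j$ and $\theta\mapsto\Delta m_t^{j,j'}=m_t^j-m_t^{j'}$ are linear, hence continuous on $\mathbb{R}^4$. Since both are nonzero at the true value, there is a radius $r>0$ such that on the open ball $B_r(\theta_t^{j,j'})$ each keeps the strict sign it has at the true value. Consequently the indicators $\mathbb{I}(\Delta s_t^{j,j'}\le 0)$ and $\mathbb{I}(\Delta m_t^{j,j'}\ge 0)$ in \eqref{eq:FAS_lb_local} and \eqref{eq:FAS_ub_local} are constant on $B_r$. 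This is the same reasoning as Cases 1 and 2 of Example 2 and the locally-constant part of Example 3.

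\textbf{Step 2 (one active branch).} With the indicators fixed, exactly one of the four products in each representation equals one on $B_r$, and the others vanish. So on $B_r$ each endpoint equals one of $m^j$, $m^{j'}$ or $(m^j+m^{j'})/2$, matching the active case of Theorem \ref{Thm1}. Call this $L(\theta)=a^\top\theta$, with $a\in\{(0,0,1,0),(0,0,0,1),(0,0,\tfrac12,\tfrac12)\}$.

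\textbf{Step 3 (derivative from the definition).} Take any $t_n\downarrow 0$ and $h_n\to h\in\mathbb{R}^4$. Because $\|t_nh_n\|\le t_n\sup_k\|h_k\|\to 0$, we have $\theta+t_nh_n\in B_r$ for all large $n$. For those $n$,
\[
\frac{FAS(\theta+t_nh_n)-FAS(\theta)}{t_n}=a^\top h_n\to a^\top h .
\]
So the directional derivative is $h\mapsto a^\top h$, which is linear and continuous on $D_0=\mathbb{R}^4$. Full Hadamard differentiability then follows, as in Example 1 or via Proposition 2.1 of \cite{fang2019inference}. The same argument applies verbatim to both the lower and the upper endpoint.

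\textbf{Main obstacle.} The only delicate point is justifying that, for a sequence, membership in the fixed-regime neighborhood holds eventually rather than for every $n$. This is harmless because the definition involves only a limit. One should also note that the strict inequalities are exactly what exclude the boundary behavior seen in Example 3.
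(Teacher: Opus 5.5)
Your proof is correct and essentially the same as the paper's. The paper also shows the active regime is constant on a neighborhood of $\theta_t^{j,j'}$, using an explicit sup-norm radius $\eta=\tfrac14\min\{|\Delta s_t^{j,j'}|,|\Delta m_t^{j,j'}|\}$ where you appeal to continuity of the linear differences, and it concludes because $\theta_t^{j,j'}+r_nh_n$ eventually lies in that neighborhood; your computation $a^\top h_n\to a^\top h$ simply writes out that final step.
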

The conditions
\[
s_t^{j'}-s_t^j\neq 0
\qquad\text{and}\qquad
m_t^j-m_t^{j'}\neq 0
\]
 have a natural economic interpretation. Recall that
\[
s_t^j=\mathbb{E}\!\left(\frac{1}{\alpha_{it}^j}\right)
\]
is the average inverse revenue share of input $j$, while
\[
m_t^j=\mathbb{E}\!\left(\frac{\hat{\theta}_{it}^j}{\alpha_{it}^j}\right)
\]
is the average estimated markup recovered from input $j$. Hence,
\[
s_t^{j'}-s_t^{j}\neq 0
\]
means that the two inputs do not have the same average inverse revenue share, whereas
\[
m_t^{j}-m_t^{j'}\neq 0
\]
means that the two input-specific estimated markups do not coincide on average. 
\begin{proof}
Let
\[
\delta_s \equiv \Delta s_t^{j,j'},
\qquad
\delta_m \equiv \Delta m_t^{j,j'}.
\]
By assumption, $\delta_s\neq 0$ and $\delta_m\neq 0$. Define
\[
\eta \equiv \frac{1}{4}\min\{|\delta_s|,|\delta_m|\}>0.
\]
Then any point
\[
\tilde{\theta}
=
(\tilde s^j,\tilde s^{j'},\tilde m^j,\tilde m^{j'})
\]
satisfying
\[
\|\tilde{\theta}-\theta_t^{j,j'}\|_\infty < \eta
\]
also satisfies
\[
\operatorname{sign}(\tilde s^{j'}-\tilde s^j)=\operatorname{sign}(\delta_s)
\qquad\text{and}\qquad
\operatorname{sign}(\tilde m^{j}-\tilde m^{j'})=\operatorname{sign}(\delta_m).
\]
Hence the active regime in \eqref{eq:FAS_lb_local}--\eqref{eq:FAS_ub_local} is constant on the open neighborhood
\[
\mathcal{N}_\eta(\theta_t^{j,j'})
=
\left\{\tilde{\theta}:\|\tilde{\theta}-\theta_t^{j,j'}\|_\infty<\eta\right\}.
\]

Therefore, on $\mathcal{N}_\eta(\theta_t^{j,j'})$, both $FAS_{lb}^{j,j'}$ and $FAS_{ub}^{j,j'}$ coincide with affine maps. Affine maps are fully Hadamard differentiable. Hence, for any sequence $r_n\downarrow 0$ and any sequence $h_n\to h$ in $\mathbb{R}^4$, one has eventually
\[
\theta_t^{j,j'}+r_n h_n \in \mathcal{N}_\eta(\theta_t^{j,j'}),
\]
so that the difference quotient is exactly the difference quotient of the corresponding affine map. Therefore it converges to the associated linear derivative.
\end{proof}

\paragraph{Directional derivatives.}
Write a generic direction as
\[
h=(h_{s^j},h_{s^{j'}},h_{m^j},h_{m^{j'}})\in\mathbb{R}^4.
\]
Under the local strict-inequality condition, the directional derivatives of the FAS bounds are
\[
(FAS_{lbt}^{j,j'})'_{\theta}(h)=
\begin{cases}
h_{m^{j'}}, & \text{if } \Delta s_t^{j,j'}<0 \text{ and } \Delta m_t^{j,j'}>0,\\[4pt]
\dfrac{h_{m^j}+h_{m^{j'}}}{2}, & \text{if } \Delta s_t^{j,j'}<0 \text{ and } \Delta m_t^{j,j'}<0,\\[8pt]
\dfrac{h_{m^j}+h_{m^{j'}}}{2}, & \text{if } \Delta s_t^{j,j'}>0 \text{ and } \Delta m_t^{j,j'}>0,\\[8pt]
h_{m^{j}}, & \text{if } \Delta s_t^{j,j'}>0 \text{ and } \Delta m_t^{j,j'}<0,
\end{cases}
\]
and
\[
(FAS_{ubt}^{j,j'})'_{\theta}(h)=
\begin{cases}
\dfrac{h_{m^j}+h_{m^{j'}}}{2}, & \text{if } \Delta s_t^{j,j'}<0 \text{ and } \Delta m_t^{j,j'}>0,\\[8pt]
h_{m^{j'}}, & \text{if } \Delta s_t^{j,j'}<0 \text{ and } \Delta m_t^{j,j'}<0,\\[4pt]
h_{m^{j}}, & \text{if } \Delta s_t^{j,j'}>0 \text{ and } \Delta m_t^{j,j'}>0,\\[4pt]
\dfrac{h_{m^j}+h_{m^{j'}}}{2}, & \text{if } \Delta s_t^{j,j'}>0 \text{ and } \Delta m_t^{j,j'}<0.
\end{cases}
\]

Equivalently, once the regime is fixed, the derivative depends only on perturbations of the markup moments and not on perturbations of the inverse-share moments. The latter determine which regime is active, whereas the value of the FAS endpoint within a fixed regime depends only on $(m_t^j,m_t^{j'})$.

\subsubsection*{Verification of the assumptions in \cite{fang2019inference} for the FAS endpoints}

We now verify that the FAS endpoint maps fall within the scope of the inferential framework in \cite{fang2019inference}. In particular, we show that Assumptions 2.1, 2.2, 3.1, 3.2 and 3.3 in \cite{fang2019inference} are satisfied, so that Corollary 3.2 in that paper can be used to obtain consistent critical values.

\paragraph{Step 1: finite-dimensional parameterization.}
Fix $(t,j,j')$ and define
\[
\theta_0 \equiv \theta_t^{j,j'}
=
\big(s_t^j,s_t^{j'},m_t^j,m_t^{j'}\big)\in\mathbb{R}^4.
\]
Let
\[
\phi_{lbt}(\theta_0)=FAS_{lbt}^{j,j'},
\qquad
\phi_{ubt}(\theta_0)=FAS_{ubt}^{j,j'}.
\]
For each endpoint map separately, we take
\[
D=\mathbb{R}^4,
\qquad
E=\mathbb{R},
\qquad
r_n=\sqrt{n}.
\]
Since $\mathbb{R}^4$ and $\mathbb{R}$ are Banach spaces, Assumption 2.1(i) in \cite{fang2019inference} is immediate.

\paragraph{Step 2: local differentiability under regime separation.}
Define the population differences
\[
\Delta s_0 \equiv s_t^{j'}-s_t^j,
\qquad
\Delta m_0 \equiv m_t^j-m_t^{j'}.
\]
Suppose
\[
\Delta s_0\neq 0
\qquad\text{and}\qquad
\Delta m_0\neq 0.
\]
Then the active regime in the piecewise FAS formula is constant on a neighborhood of $\theta_0$.\footnote{This is the key local point. The FAS map is globally piecewise affine and therefore globally nonsmooth, but differentiability is a \emph{local} property. If $\Delta s_0\neq 0$ and $\Delta m_0\neq 0$, then $\theta_0$ lies strictly inside one regime, rather than on a switching boundary. Consequently, there exists an open neighborhood of $\theta_0$ in which the same branch of the piecewise formula remains active, so the map coincides locally with a single affine function. In that sense the map is fully Hadamard differentiable \emph{at} $\theta_0$, even though it fails to be globally differentiable because different affine branches meet at the boundaries $\Delta s_0=0$ or $\Delta m_0=0$. This is entirely analogous to the map $(x,y)\mapsto \max\{x,y\}$: it is globally nonsmooth, but fully differentiable at every point with $x\neq y$, and only directionally differentiable on the diagonal. See Definition 2.1 and Proposition 2.1 in \cite{fang2019inference}.}

Hence each endpoint map reduces locally to a single affine function, and therefore is fully Hadamard differentiable at $\theta_0$. Since full Hadamard differentiability implies Hadamard directional differentiability, Assumption 2.1(ii) in \cite{fang2019inference} also holds. Moreover, because the derivative map is linear and defined on all of $\mathbb{R}^4$, Assumption 2.1(iii) is immediate.

\paragraph{Verification of Assumption 2.2.}
Let $\hat\theta_n$ denote the estimator of $\theta_0$, and suppose that
\[
\sqrt{n}(\hat\theta_n-\theta_0)\rightsquigarrow G_0
\]
for a Gaussian random vector $G_0\in\mathbb{R}^4$. Then Assumption 2.2(i) in \cite{fang2019inference} holds with $r_n=\sqrt n$. Since the parameter space is finite-dimensional, tightness and support inclusion in $D_0=D=\mathbb{R}^4$ are automatic, so Assumption 2.2(ii) also holds.

\paragraph{Verification of Assumptions 3.1 and 3.2.}
Let $\hat\theta_n^*$ be a bootstrap version of $\hat\theta_n$ satisfying
\[
\sup_{f\in BL_1(\mathbb{R}^4)}
\left|
E^*\!\left[f\!\left(\sqrt n(\hat\theta_n^*-\hat\theta_n)\right)\right]
-
E[f(G_0)]
\right|
=o_p(1),
\]
where $E^*$ denotes bootstrap expectation conditional on the sample. Then Assumption 3.1 in \cite{fang2019inference} holds. Since the bootstrap statistic is finite-dimensional and measurable under standard bootstrap constructions, Assumption 3.2 is also satisfied.

\paragraph{Verification of Assumption 3.3.}
Define the sample analog differences
\[
\hat\Delta s \equiv \hat s_t^{j'}-\hat s_t^j,
\qquad
\hat\Delta m \equiv \hat m_t^j-\hat m_t^{j'}.
\]
Construct the estimator $\hat\phi_n'$ of the directional derivative by selecting the derivative corresponding to the sample regime. For the lower endpoint:
\[
\hat\phi'_{lb,n}(h)=
\begin{cases}
h_{m^{j'}}, & \hat\Delta s<0 \text{ and } \hat\Delta m>0,\\[4pt]
\dfrac{h_{m^j}+h_{m^{j'}}}{2}, & \hat\Delta s<0 \text{ and } \hat\Delta m<0,\\[8pt]
\dfrac{h_{m^j}+h_{m^{j'}}}{2}, & \hat\Delta s>0 \text{ and } \hat\Delta m>0,\\[8pt]
h_{m^{j}}, & \hat\Delta s>0 \text{ and } \hat\Delta m<0,
\end{cases}
\]
and analogously for the upper endpoint.

Under consistency of $\hat\theta_n$ and the regime-separation condition,
\[
P\!\left(
\operatorname{sign}(\hat\Delta s)=\operatorname{sign}(\Delta s_0),
\ \operatorname{sign}(\hat\Delta m)=\operatorname{sign}(\Delta m_0)
\right)\to 1.
\]
On this event, the sample-selected derivative coincides exactly with the population derivative for all $h\in\mathbb{R}^4$. Therefore, for every compact set $K\subset\mathbb{R}^4$ and every $\delta>0$,
\[
\sup_{h\in K^\delta}
\big|
\hat\phi_n'(h)-\phi'_{\theta_0}(h)
\big|
=0
\]
with probability approaching one. Hence Assumption 3.3 in \cite{fang2019inference} holds.

\paragraph{Application of Corollary 3.2.}
Since each endpoint map is scalar-valued, Corollary 3.2 in \cite{fang2019inference} applies provided the cdf of $\phi'_{\theta_0}(G_0)$ is continuous and strictly increasing at its $(1-\alpha)$ quantile. Under the preceding conditions, $\phi'_{\theta_0}(G_0)$ is a  transformation of a Gaussian vector. Continuity and strict monotonicity of its cdf follow whenever
\[
Var\!\big(\phi'_{\theta_0}(G_0)\big)>0.
\]
Consequently, if
\[
Var\!\big(\phi'_{\theta_0}(G_0)\big)>0,
\]
the conditional bootstrap critical value
\[
\hat c_{1-\alpha}
=
\inf\left\{
c:\ 
P^*\!\left(
\hat\phi_n'\!\big(\sqrt n(\hat\theta_n^*-\hat\theta_n)\big)\le c
\right)\ge 1-\alpha
\right\}
\]
satisfies
\[
\hat c_{1-\alpha}\xrightarrow{p} c_{1-\alpha},
\]
where $c_{1-\alpha}$ denotes the $(1-\alpha)$ quantile of $\phi'_{\theta_0}(G_0)$.

\paragraph{Conclusion.}
For each fixed $(t,j,j')$, the lower and upper FAS endpoint maps satisfy Assumptions 2.1, 2.2, 3.1, 3.2 and 3.3 in \cite{fang2019inference} under the following sufficient conditions:
\begin{enumerate}
    \item \emph{Regime separation:}
    \[
    s_t^{j'}-s_t^j\neq 0
    \qquad\text{and}\qquad
    m_t^j-m_t^{j'}\neq 0.
    \]
    \item \emph{Asymptotic linearity / CLT:}
    \[
    \sqrt n(\hat\theta_n-\theta_0)\rightsquigarrow G_0,
    \]
    with $G_0$ Gaussian in $\mathbb{R}^4$.
    \item \emph{Bootstrap validity for the first-stage estimator:}
    the conditional law of $\sqrt n(\hat\theta_n^*-\hat\theta_n)$ consistently estimates the law of $G_0$.
    \item \emph{Nondegenerate limit law:}
    \[
    Var\!\big(\phi'_{\theta_0}(G_0)\big)>0.
    \]
\end{enumerate}
Under these conditions, Corollary 3.2 in \cite{fang2019inference} yields consistent critical values for inference on each FAS endpoint.

\subsubsection*{Bootstrap critical values and confidence intervals for the FAS endpoints}

We next specialize Corollary 3.2 in \cite{fang2019inference} to the lower and upper FAS endpoints. Fix $(t,j,j')$ and let
\[
\theta_0
=
\big(s_t^j,s_t^{j'},m_t^j,m_t^{j'}\big)\in\mathbb{R}^4,
\qquad
\hat\theta_n
=
\big(\hat s_t^j,\hat s_t^{j'},\hat m_t^j,\hat m_t^{j'}\big)\in\mathbb{R}^4.
\]
Let $\hat\theta_n^*$ denote a bootstrap draw from the baseline estimators, and define
\[
\mathbb{G}_n^*
\equiv
\sqrt n\big(\hat\theta_n^*-\hat\theta_n\big)\in\mathbb{R}^4.
\]

We also define the sample analog regime differences
\[
\hat\Delta s
\equiv
\hat s_t^{j'}-\hat s_t^j,
\qquad
\hat\Delta m
\equiv
\hat m_t^j-\hat m_t^{j'}.
\]

For any direction
\[
h=(h_{s^j},h_{s^{j'}},h_{m^j},h_{m^{j'}})\in\mathbb{R}^4,
\]
define the estimated directional derivative of the lower endpoint by
\[
\hat\phi'_{lb,n}(h)=
\begin{cases}
h_{m^{j'}}, & \text{if } \hat\Delta s<0 \text{ and } \hat\Delta m>0,\\[4pt]
\dfrac{h_{m^j}+h_{m^{j'}}}{2}, & \text{if } \hat\Delta s<0 \text{ and } \hat\Delta m<0,\\[8pt]
\dfrac{h_{m^j}+h_{m^{j'}}}{2}, & \text{if } \hat\Delta s>0 \text{ and } \hat\Delta m>0,\\[8pt]
h_{m^{j}}, & \text{if } \hat\Delta s>0 \text{ and } \hat\Delta m<0,
\end{cases}
\]
and analogously define the estimated directional derivative of the upper endpoint by
\[
\hat\phi'_{ub,n}(h)=
\begin{cases}
\dfrac{h_{m^j}+h_{m^{j'}}}{2}, & \text{if } \hat\Delta s<0 \text{ and } \hat\Delta m>0,\\[8pt]
h_{m^{j'}}, & \text{if } \hat\Delta s<0 \text{ and } \hat\Delta m<0,\\[4pt]
h_{m^{j}}, & \text{if } \hat\Delta s>0 \text{ and } \hat\Delta m>0,\\[4pt]
\dfrac{h_{m^j}+h_{m^{j'}}}{2}, & \text{if } \hat\Delta s>0 \text{ and } \hat\Delta m<0.
\end{cases}
\]

Let
\[
Z_{lb,n}^*
\equiv
\hat\phi'_{lb,n}\!\left(\mathbb{G}_n^*\right),
\qquad
Z_{ub,n}^*
\equiv
\hat\phi'_{ub,n}\!\left(\mathbb{G}_n^*\right).
\]
Conditional on the sample, these statistics provide the bootstrap approximation to the limiting distributions of
\[
\sqrt n\left(\widehat{FAS}_{lb}^{j,j'}-FAS_{lb}^{j,j'}\right),
\qquad
\sqrt n\left(\widehat{FAS}_{ub}^{j,j'}-FAS_{ub}^{j,j'}\right),
\]
respectively.

For any $\tau\in(0,1)$, define the conditional bootstrap quantiles
\[
\hat c_{lb,\tau}
\equiv
\inf\left\{
c\in\mathbb{R}:
P^*\!\left(Z_{lb,n}^*\le c\right)\ge \tau
\right\},
\]
and
\[
\hat c_{ub,\tau}
\equiv
\inf\left\{
c\in\mathbb{R}:
P^*\!\left(Z_{ub,n}^*\le c\right)\ge \tau
\right\},
\]
where $P^*(\cdot)$ denotes probability under the bootstrap law conditional on the observed sample.

In particular, the $(1-\alpha)$ critical values are
\[
\hat c_{lb,1-\alpha}
=
\inf\left\{
c\in\mathbb{R}:
P^*\!\left(Z_{lb,n}^*\le c\right)\ge 1-\alpha
\right\},
\]
and
\[
\hat c_{ub,1-\alpha}
=
\inf\left\{
c\in\mathbb{R}:
P^*\!\left(Z_{ub,n}^*\le c\right)\ge 1-\alpha
\right\}.
\]

Let
\[
\widehat{FAS}_{lb}^{j,j'}\equiv \phi_{lb}(\hat\theta_n),
\qquad
\widehat{FAS}_{ub}^{j,j'}\equiv \phi_{ub}(\hat\theta_n).
\]
Then an asymptotic lower one-sided $(1-\alpha)$ confidence interval for the lower endpoint is
\[
CI_{lb,\mathrm{lower}}^{1-\alpha}
=
\left[
\widehat{FAS}_{lb}^{j,j'}
-
\frac{\hat c_{lb,1-\alpha}}{\sqrt n},
\,
+\infty
\right),
\]
while an asymptotic upper one-sided $(1-\alpha)$ confidence interval is
\[
CI_{lb,\mathrm{upper}}^{1-\alpha}
=
\left(
-\infty,
\,
\widehat{FAS}_{lb}^{j,j'}
-
\frac{\hat c_{lb,\alpha}}{\sqrt n}
\right].
\]

Analogously, for the upper endpoint:
\[
CI_{ub,\mathrm{lower}}^{1-\alpha}
=
\left[
\widehat{FAS}_{ub}^{j,j'}
-
\frac{\hat c_{ub,1-\alpha}}{\sqrt n},
\,
+\infty
\right),
\]
and
\[
CI_{ub,\mathrm{upper}}^{1-\alpha}
=
\left(
-\infty,
\,
\widehat{FAS}_{ub}^{j,j'}
-
\frac{\hat c_{ub,\alpha}}{\sqrt n}
\right].
\]

Two-sided asymptotic $(1-\alpha)$ confidence interval for the lower endpoint can be defined as
\[
CI_{lb}^{1-\alpha}
=
\left[
\widehat{FAS}_{lb}^{j,j'}
-
\frac{\hat c_{lb,1-\alpha/2}}{\sqrt n},
\,
\widehat{FAS}_{lb}^{j,j'}
-
\frac{\hat c_{lb,\alpha/2}}{\sqrt n}
\right].
\]
Similarly, a two-sided asymptotic $(1-\alpha)$ confidence interval for the upper endpoint is
\[
CI_{ub}^{1-\alpha}
=
\left[
\widehat{FAS}_{ub}^{j,j'}
-
\frac{\hat c_{ub,1-\alpha/2}}{\sqrt n},
\,
\widehat{FAS}_{ub}^{j,j'}
-
\frac{\hat c_{ub,\alpha/2}}{\sqrt n}
\right].
\]

Under: 
(i) regime separation,
\[
s_t^{j'}-s_t^j\neq 0
\qquad\text{and}\qquad
m_t^j-m_t^{j'}\neq 0,
\]
(ii) a Gaussian limit law for $\sqrt n(\hat\theta_n-\theta_0)$, 
(iii) bootstrap consistency for $\sqrt n(\hat\theta_n^*-\hat\theta_n)$, and (iv) nondegeneracy of the scalar limit law---Corollary 3.2 in \cite{fang2019inference} implies
\[
\hat c_{lb,\tau}\xrightarrow{p} c_{lb,\tau},
\qquad
\hat c_{ub,\tau}\xrightarrow{p} c_{ub,\tau},
\]
for every continuity point $\tau$ of the corresponding limiting distributions. Consequently, the confidence intervals above are asymptotically valid for the scalar FAS endpoints. This follows directly from Theorem 3.2 and Corollary 3.2 in \cite{fang2019inference}.

\subsubsection*{Bootstrap Implementation}

We summarize the construction of the bootstrap critical values for the FAS endpoints in the following algorithm.

\begin{algorithm}
\begin{enumerate}
    \item From the observed data
    \[
    \{Q_{it},V_{it}^1,V_{it}^2,K_{it},P_{it}^1,P_{it}^2,P_{it}Q_{it}\}_{i,t}^{N,T},
    \]
    compute the expenditure shares
    \[
    \alpha_{it}^j=\frac{P_{it}^jV_{it}^j}{P_{it}Q_{it}},
    \qquad j=1,2,
    \]
    and, using a chosen production-function estimator, obtain $\hat{\hat{\theta}}_{it}^j$ for each $j=1,2$.

    \item For each time period $t$, compute the sample analogues
    \[
    \hat s_{t,n}^j
    \equiv
    \frac{1}{n_t}\sum_{i=1}^{n_t}\frac{1}{\alpha_{it}^j},
    \qquad
    \hat m_{t,n}^j
    \equiv
    \frac{1}{n_t}\sum_{i=1}^{n_t}\frac{\hat{\hat{\theta}}_{it}^j}{\alpha_{it}^j},
    \qquad j=1,2.
    \]

    \item For the input pair $(j,j')$, compute
    \[
    \hat\Delta s_t^{j,j'}\equiv \hat s_{t,n}^{j'}-\hat s_{t,n}^{j},
    \qquad
    \hat\Delta m_t^{j,j'}\equiv \hat m_{t,n}^{j}-\hat m_{t,n}^{j'}.
    \]

    \item Using the signs of $\hat\Delta s_t^{j,j'}$ and $\hat\Delta m_t^{j,j'}$, determine the active regime and compute the sample FAS endpoints
    \[
    \widehat{FAS}_{lb,t}^{j,j'},
    \qquad
    \widehat{FAS}_{ub,t}^{j,j'}.
    \]

    \item Draw $B$ bootstrap samples by resampling the data with replacement, clustered at the unit level $i$.

    \item For each bootstrap replication $b=1,\dots,B$, re-estimate the production function and obtain the bootstrap elasticities
    \[
    \hat{\hat{\theta}}_{it}^{j,b},
    \qquad
    \hat{\hat{\theta}}_{it}^{j',b}.
    \]

    \item For each $b$ and $t$, compute the bootstrap analogues
    \[
    \hat s_{t,n}^{j,b},
    \qquad
    \hat m_{t,n}^{j,b},
    \qquad
    \hat s_{t,n}^{j',b},
    \qquad
    \hat m_{t,n}^{j',b}.
    \]

    \item For each $b$ and $t$, construct the bootstrap perturbation
    \[
    \mathbb{G}_{t,n}^{*,b}
    =
    \left(
    \sqrt n(\hat m_{t,n}^{j,b}-\hat m_{t,n}^{j}),
    \sqrt n(\hat m_{t,n}^{j',b}-\hat m_{t,n}^{j'})
    \right).
    \]
    Since the active regime is fixed by the signs of $\hat\Delta s_t^{j,j'}$ and $\hat\Delta m_t^{j,j'}$, the derivative depends only on the perturbations of the markup moments. Thus define
    \[
    \hat\phi'_{lb,t,n}\!\left(\mathbb{G}_{t,n}^{*,b}\right)
    =
    \begin{cases}
    \sqrt n(\hat m_{t,n}^{j',b}-\hat m_{t,n}^{j'}), 
    & \text{if } \hat\Delta s_t^{j,j'}<0,\ \hat\Delta m_t^{j,j'}>0,\\[4pt]
    \dfrac{
    \sqrt n(\hat m_{t,n}^{j,b}-\hat m_{t,n}^{j})
    +
    \sqrt n(\hat m_{t,n}^{j',b}-\hat m_{t,n}^{j'})
    }{2},
    & \text{if } \hat\Delta s_t^{j,j'}<0,\ \hat\Delta m_t^{j,j'}<0,\\[8pt]
    \dfrac{
    \sqrt n(\hat m_{t,n}^{j,b}-\hat m_{t,n}^{j})
    +
    \sqrt n(\hat m_{t,n}^{j',b}-\hat m_{t,n}^{j'})
    }{2},
    & \text{if } \hat\Delta s_t^{j,j'}>0,\ \hat\Delta m_t^{j,j'}>0,\\[8pt]
    \sqrt n(\hat m_{t,n}^{j,b}-\hat m_{t,n}^{j}),
    & \text{if } \hat\Delta s_t^{j,j'}>0,\ \hat\Delta m_t^{j,j'}<0,
    \end{cases}
    \]
    and
    \[
    \hat\phi'_{ub,t,n}\!\left(\mathbb{G}_{t,n}^{*,b}\right)
    =
    \begin{cases}
    \dfrac{
    \sqrt n(\hat m_{t,n}^{j,b}-\hat m_{t,n}^{j})
    +
    \sqrt n(\hat m_{t,n}^{j',b}-\hat m_{t,n}^{j'})
    }{2},
    & \text{if } \hat\Delta s_t^{j,j'}<0,\ \hat\Delta m_t^{j,j'}>0,\\[8pt]
    \sqrt n(\hat m_{t,n}^{j',b}-\hat m_{t,n}^{j'}),
    & \text{if } \hat\Delta s_t^{j,j'}<0,\ \hat\Delta m_t^{j,j'}<0,\\[4pt]
    \sqrt n(\hat m_{t,n}^{j,b}-\hat m_{t,n}^{j}),
    & \text{if } \hat\Delta s_t^{j,j'}>0,\ \hat\Delta m_t^{j,j'}>0,\\[4pt]
    \dfrac{
    \sqrt n(\hat m_{t,n}^{j,b}-\hat m_{t,n}^{j})
    +
    \sqrt n(\hat m_{t,n}^{j',b}-\hat m_{t,n}^{j'})
    }{2},
    & \text{if } \hat\Delta s_t^{j,j'}>0,\ \hat\Delta m_t^{j,j'}<0.
    \end{cases}
    \]

    \item Use the empirical distributions
    \[
    \left\{
    \hat\phi'_{lb,t,n}\!\left(\mathbb{G}_{t,n}^{*,b}\right)
    \right\}_{b=1}^{B},
    \qquad
    \left\{
    \hat\phi'_{ub,t,n}\!\left(\mathbb{G}_{t,n}^{*,b}\right)
    \right\}_{b=1}^{B},
    \]
    to estimate the bootstrap quantiles $\hat c_{lb,t,\tau}$ and $\hat c_{ub,t,\tau}$.

    \item Construct the one-sided confidence intervals
    \[
    CI_{lb,t,\mathrm{lower}}^{1-\alpha}
    =
    \left[
    \widehat{FAS}_{lb,t}^{j,j'}
    -
    \frac{\hat c_{lb,t,1-\alpha}}{\sqrt n},
    \,
    +\infty
    \right),
    \]
    and
    \[
    CI_{ub,t,\mathrm{upper}}^{1-\alpha}
    =
    \left(
    -\infty,
    \,
    \widehat{FAS}_{ub,t}^{j,j'}
    -
    \frac{\hat c_{ub,t,\alpha}}{\sqrt n}
    \right].
    \]
\end{enumerate}
\end{algorithm}

\subsubsection{Additional Implementation of the Bootstrap Procedure for the FAS Endpoints in the Empirical Application}

This subsection describes additional implementation details of the bootstrap procedure used to construct confidence intervals for the lower and upper endpoints of the falsification adaptive set (FAS) in the empirical application.

We focus on the pair of variable inputs given by labor and materials. In the notation of the paper, labor corresponds to input $j=1$ and materials correspond to input $j'=2$. For each year $t$, we construct the sample analogues
\[
\hat s_{t,n}^1
=
\frac{1}{n_t}\sum_{i=1}^{n_t}\frac{1}{\alpha_{it}^{1}},
\qquad
\hat s_{t,n}^2
=
\frac{1}{n_t}\sum_{i=1}^{n_t}\frac{1}{\alpha_{it}^{2}},
\]
and
\[
\hat m_{t,n}^1
=
\frac{1}{n_t}\sum_{i=1}^{n_t}\frac{\hat{\hat\theta}_{it}^{1}}{\alpha_{it}^{1}},
\qquad
\hat m_{t,n}^2
=
\frac{1}{n_t}\sum_{i=1}^{n_t}\frac{\hat{\hat\theta}_{it}^{2}}{\alpha_{it}^{2}}.
\]
In the empirical implementation, the input-specific estimated markups implied by the selected production-function estimator are directly available. Accordingly, the objects $\hat m_{t,n}^1$ and $\hat m_{t,n}^2$ are computed as the yearly averages of the labor-based and materials-based markup estimates, respectively.

The yearly regime is determined by the signs of
\[
\hat\Delta s_t^{1,2}
=
\hat s_{t,n}^{2}-\hat s_{t,n}^{1},
\qquad
\hat\Delta m_t^{1,2}
=
\hat m_{t,n}^{1}-\hat m_{t,n}^{2}.
\]
Given these signs, the relevant piecewise formula is selected and the sample FAS endpoints,
\[
\widehat{FAS}_{lb,t}^{1,2}
\qquad\text{and}\qquad
\widehat{FAS}_{ub,t}^{1,2},
\]
are computed year by year.

The bootstrap is implemented by resampling establishments with replacement, clustered at the plant level. Clustering at the plant level preserves the panel structure of the data and is consistent with the fact that the production-function estimator relies on lagged plant-level variables. In each bootstrap replication, the production function is re-estimated from scratch, and the corresponding bootstrap analogues
\[
\hat m_{t,n}^{1,b},
\qquad
\hat m_{t,n}^{2,b}
\]
are recomputed for each year $t$ and bootstrap draw $b$.

The bootstrap perturbation is constructed as
\[
\mathbb{G}_{t,n}^{*,b}
=
\left(
\sqrt{n_t}\big(\hat m_{t,n}^{1,b}-\hat m_{t,n}^{1}\big),
\sqrt{n_t}\big(\hat m_{t,n}^{2,b}-\hat m_{t,n}^{2}\big)
\right).
\]
This choice follows the standard bootstrap convention in which the empirical process is centered at the original estimator. Once the regime is fixed by the signs of $(\hat\Delta s_t^{1,2},\hat\Delta m_t^{1,2})$, the directional derivative of each endpoint depends only on perturbations of the markup moments $(m_t^1,m_t^2)$ and not on perturbations of the inverse-share moments $(s_t^1,s_t^2)$. Therefore, in practice, the derivative is evaluated only on the bootstrap perturbations of the $m$-components.

For each bootstrap replication, the estimated directional derivatives of the lower and upper endpoints are computed using the regime-specific formulas derived in the Appendix. The empirical distributions of these bootstrap derivative draws are then used to estimate the relevant quantiles. Finally, one-sided confidence intervals are constructed as
\[
CI_{lb,t,\mathrm{lower}}^{1-\alpha}
=
\left[
\widehat{FAS}_{lb,t}^{1,2}
-
\frac{\hat c_{lb,t,1-\alpha}}{\sqrt{n_t}},
\,
+\infty
\right),
\]
and
\[
CI_{ub,t,\mathrm{upper}}^{1-\alpha}
=
\left(
-\infty,
\,
\widehat{FAS}_{ub,t}^{1,2}
-
\frac{\hat c_{ub,t,\alpha}}{\sqrt{n_t}}
\right].
\]

\subsection{Pseudo-Stata implementation of the FAS and its confidence intervals}\label{app:pseudostata_fas}

This appendix provides a step-by-step pseudo-Stata implementation of the falsification adaptive set (FAS) for the average markup and its bootstrap confidence intervals in the two-input case. The procedure requires, for each firm-year observation, (i) total revenue and variable-input expenditures to construct revenue shares, and (ii) estimated output elasticities for at least two variable inputs obtained from a production-function estimator chosen by the researcher. The code below is specialized to the two-input case discussed in the main text. It first computes the sample analogs of the moments entering the FAS, then applies the piecewise FAS formula, and finally constructs bootstrap critical values and confidence intervals for the lower and upper FAS endpoints using the local directional derivative approach.

\paragraph{Required inputs.}
For each firm-year observation, the researcher must have:
\begin{itemize}
    \item a firm identifier $id$ and a time identifier $t$;
    \item total sales or revenues $sales$;
    \item expenditures on two variable inputs, denoted $exp1$ and $exp2$;
    \item estimated output elasticities $\hat\theta_{it}^{1}$ and $\hat\theta_{it}^{2}$ obtained from a chosen production-function estimator.
\end{itemize}
From these objects, define the revenue shares
\[
\alpha_{it}^{1} = \frac{exp1_{it}}{sales_{it}},
\qquad
\alpha_{it}^{2} = \frac{exp2_{it}}{sales_{it}},
\]
and the input-specific estimated markups
\[
\hat\mu_{it}^{1} = \frac{\hat\theta_{it}^{1}}{\alpha_{it}^{1}},
\qquad
\hat\mu_{it}^{2} = \frac{\hat\theta_{it}^{2}}{\alpha_{it}^{2}}.
\]
The year-specific moments entering the FAS are
\[
s_t^j \equiv E\!\left[\frac{1}{\alpha_{it}^{j}}\right],
\qquad
m_t^j \equiv E\!\left[\frac{\hat\theta_{it}^{j}}{\alpha_{it}^{j}}\right],
\qquad j \in \{1,2\}.
\]

\paragraph{Algorithm.}
\begin{enumerate}
    \item Estimate the production function and recover $\hat\theta_{it}^{1}$ and $\hat\theta_{it}^{2}$ using the researcher's preferred estimator.
    \item Construct the revenue shares $\alpha_{it}^{1}$ and $\alpha_{it}^{2}$ and compute the input-specific estimated markups $\hat\mu_{it}^{1}$ and $\hat\mu_{it}^{2}$.
    \item For each period $t$, compute the sample analogs $\hat s_t^{1}$, $\hat s_t^{2}$, $\hat m_t^{1}$, and $\hat m_t^{2}$.
    \item Compute the regime-selection differences
    \[
    \hat\Delta s_t^{1,2} = \hat s_t^{2} - \hat s_t^{1},
    \qquad
    \hat\Delta m_t^{1,2} = \hat m_t^{1} - \hat m_t^{2}.
    \]
    \item Apply the piecewise FAS formula to obtain the estimated lower and upper FAS endpoints, $\widehat{FAS}_{lb,t}^{1,2}$ and $\widehat{FAS}_{ub,t}^{1,2}$.
    \item Draw bootstrap samples (e.g. clustered by firm), and in each bootstrap replication re-estimate the production function, recompute the revenue shares and the year-level moments, and then evaluate the estimated directional derivative corresponding to the sample regime.
    \item Use the empirical bootstrap distribution of the directional-derivative statistics to construct bootstrap critical values for the lower and upper FAS endpoints.
    \item Construct one-sided or two-sided confidence intervals for each scalar FAS endpoint.
\end{enumerate}

\paragraph{Pseudo-Stata code.}
\begin{verbatim}
/********************************************************************/
/* PSEUDO-STATA: FAS FOR THE AVERAGE MARKUP WITH CONFIDENCE INTERVALS */
/* Two-input case: j = 1, 2                                          */
/********************************************************************/

*---------------------------------------------------------------*
* INPUTS REQUIRED
*---------------------------------------------------------------*
* Firm-year panel data with:
*   - firm identifier:     id
*   - time identifier:     t
*   - total revenue:       sales
*   - expenditure on input 1: exp1
*   - expenditure on input 2: exp2
*   - estimated elasticity for input 1: theta1_hat
*   - estimated elasticity for input 2: theta2_hat
*
* The researcher must obtain theta1_hat and theta2_hat from a
* chosen production-function estimator before running the steps below.
*
* The baseline markup formula is mu_it^j = theta_it^j / alpha_it^j,
* where alpha_it^j is the revenue share of variable input j.
*---------------------------------------------------------------*

*---------------------------------------------------------------*
* STEP 0. CONSTRUCT REVENUE SHARES AND INPUT-SPECIFIC MARKUPS
*---------------------------------------------------------------*
gen alpha1 = exp1 / sales
gen alpha2 = exp2 / sales
gen shareinv1 = 1 / alpha1
gen shareinv2 = 1 / alpha2
gen markup1_hat = theta1_hat / alpha1
gen markup2_hat = theta2_hat / alpha2

*---------------------------------------------------------------*
* STEP 1. COMPUTE YEAR-SPECIFIC SAMPLE ANALOGS OF THE MOMENTS
*         s_t^j = E[1/alpha_it^j]
*         m_t^j = E[theta_hat_it^j / alpha_it^j]
*---------------------------------------------------------------*
bysort t: egen s1_hat_t = mean(shareinv1)
bysort t: egen s2_hat_t = mean(shareinv2)
bysort t: egen m1_hat_t = mean(markup1_hat)
bysort t: egen m2_hat_t = mean(markup2_hat)
bysort t: gen one = 1
bysort t: egen n_t = total(one)
bysort t: keep if _n == 1

*---------------------------------------------------------------*
* STEP 2. COMPUTE THE REGIME-SELECTION DIFFERENCES
*         Delta s_t^(1,2) = s_t^2 - s_t^1
*         Delta m_t^(1,2) = m_t^1 - m_t^2
*---------------------------------------------------------------*
gen ds_hat = s2_hat_t - s1_hat_t
gen dm_hat = m1_hat_t - m2_hat_t

*---------------------------------------------------------------*
* STEP 3. COMPUTE THE SAMPLE FAS ENDPOINTS
*---------------------------------------------------------------*
gen FAS_lb_hat = .
replace FAS_lb_hat = m2_hat_t                  if ds_hat <= 0 & dm_hat >= 0
replace FAS_lb_hat = (m1_hat_t + m2_hat_t)/2  if ds_hat <= 0 & dm_hat <  0
replace FAS_lb_hat = (m1_hat_t + m2_hat_t)/2  if ds_hat >  0 & dm_hat >= 0
replace FAS_lb_hat = m1_hat_t                  if ds_hat >  0 & dm_hat <  0
gen FAS_ub_hat = .
replace FAS_ub_hat = (m1_hat_t + m2_hat_t)/2  if ds_hat <= 0 & dm_hat >= 0
replace FAS_ub_hat = m2_hat_t                  if ds_hat <= 0 & dm_hat <  0
replace FAS_ub_hat = m1_hat_t                  if ds_hat >  0 & dm_hat >= 0
replace FAS_ub_hat = (m1_hat_t + m2_hat_t)/2  if ds_hat >  0 & dm_hat <  0

* Save this dataset as the original year-level statistics
save original_year_stats, replace

*---------------------------------------------------------------*
* STEP 4. BOOTSTRAP THE FIRST-STAGE ESTIMATOR
* In each bootstrap replication b:
*   (i) resample firms (or clusters) with replacement;
*   (ii) re-estimate the production function;
*   (iii) recompute theta1_hat_b and theta2_hat_b;
*   (iv) recompute alpha1_b, alpha2_b, markup1_hat_b, markup2_hat_b;
*   (v) recompute year-level moments.
*---------------------------------------------------------------*
tempfile bootstats
tempname posth
postfile `posth' int b int t double Z_lb_b Z_ub_b using `bootstats', replace
forvalues b = 1/BOOT_REPS {
    preserve
        * 4.1 Draw a bootstrap sample
        bsample, cluster(id)
        * 4.2 Re-estimate the production function
        * User-supplied routine:
        *   produce theta1_hat_b and theta2_hat_b
        * Example placeholder:
        * do "estimate_production_function.do"
        * 4.3 Recompute shares and input-specific markups
        gen alpha1_b = exp1 / sales
        gen alpha2_b = exp2 / sales
        gen shareinv1_b = 1 / alpha1_b
        gen shareinv2_b = 1 / alpha2_b
        gen markup1_hat_b = theta1_hat_b / alpha1_b
        gen markup2_hat_b = theta2_hat_b / alpha2_b
        * 4.4 Recompute year-level sample analogs
        bysort t: egen s1_hat_tb = mean(shareinv1_b)
        bysort t: egen s2_hat_tb = mean(shareinv2_b)
        bysort t: egen m1_hat_tb = mean(markup1_hat_b)
        bysort t: egen m2_hat_tb = mean(markup2_hat_b)
        bysort t: gen one_b = 1
        bysort t: egen n_t_b = total(one_b)
        bysort t: keep if _n == 1
        * 4.5 Merge with original year-level statistics
        merge 1:1 t using original_year_stats, nogen keep(3)
        * 4.6 Construct bootstrap perturbations
        gen g_m1 = sqrt(n_t) * (m1_hat_tb - m1_hat_t)
        gen g_m2 = sqrt(n_t) * (m2_hat_tb - m2_hat_t)
        * 4.7 Estimated directional derivative: lower endpoint
        gen Z_lb_b = .
        replace Z_lb_b = g_m2                if ds_hat < 0 & dm_hat > 0
        replace Z_lb_b = (g_m1 + g_m2)/2    if ds_hat < 0 & dm_hat < 0
        replace Z_lb_b = (g_m1 + g_m2)/2    if ds_hat > 0 & dm_hat > 0
        replace Z_lb_b = g_m1                if ds_hat > 0 & dm_hat < 0
        * 4.8 Estimated directional derivative: upper endpoint
        gen Z_ub_b = .
        replace Z_ub_b = (g_m1 + g_m2)/2    if ds_hat < 0 & dm_hat > 0
        replace Z_ub_b = g_m2                if ds_hat < 0 & dm_hat < 0
        replace Z_ub_b = g_m1                if ds_hat > 0 & dm_hat > 0
        replace Z_ub_b = (g_m1 + g_m2)/2    if ds_hat > 0 & dm_hat < 0
        * 4.9 Store bootstrap statistics by year
        quietly {
            forvalues yy = START_YEAR/END_YEAR {
                count if t == `yy'
                if r(N) == 1 {
                    su Z_lb_b if t == `yy', meanonly
                    local zlb = r(mean)
                    su Z_ub_b if t == `yy', meanonly
                    local zub = r(mean)
                    post `posth' (`b') (`yy') (`zlb') (`zub')
                }
            }
        }
    restore
}
postclose `posth'

*---------------------------------------------------------------*
* STEP 5. COMPUTE BOOTSTRAP QUANTILES
* For any tau in (0,1):
*   c_lb,tau = tau-quantile of Z_lb_b
*   c_ub,tau = tau-quantile of Z_ub_b
*---------------------------------------------------------------*
use `bootstats', clear
preserve
    keep t Z_lb_b
    statsby c_lb_95 = r(c_1) c_lb_05 = r(c_2), by(t) ///
        saving(q_lb, replace): centile Z_lb_b, centile(95 5)
restore
preserve
    keep t Z_ub_b
    statsby c_ub_95 = r(c_1) c_ub_05 = r(c_2), by(t) ///
        saving(q_ub, replace): centile Z_ub_b, centile(95 5)
restore

*---------------------------------------------------------------*
* STEP 6. CONSTRUCT CONFIDENCE INTERVALS FOR THE FAS ENDPOINTS
*---------------------------------------------------------------*
use original_year_stats, clear
merge 1:1 t using q_lb, nogen
merge 1:1 t using q_ub, nogen
* One-sided lower CI for the lower endpoint:
*   [ FAS_lb_hat - c_lb,1-alpha / sqrt(n), +infty )
gen CI_lb_lower = FAS_lb_hat - c_lb_95 / sqrt(n_t)
* One-sided upper CI for the upper endpoint:
*   ( -infty, FAS_ub_hat - c_ub,alpha / sqrt(n) ]
gen CI_ub_upper = FAS_ub_hat - c_ub_05 / sqrt(n_t)
* Optional: two-sided CIs
gen CI_lb_lo_2s = FAS_lb_hat - c_lb_95 / sqrt(n_t)
gen CI_lb_hi_2s = FAS_lb_hat - c_lb_05 / sqrt(n_t)
gen CI_ub_lo_2s = FAS_ub_hat - c_ub_95 / sqrt(n_t)
gen CI_ub_hi_2s = FAS_ub_hat - c_ub_05 / sqrt(n_t)

*---------------------------------------------------------------*
* STEP 7. REPORT
* Report by year:
*   - FAS_lb_hat, FAS_ub_hat
*   - CI_lb_lower, CI_ub_upper
*   - optionally plot the FAS over time with its endpoint CIs
*---------------------------------------------------------------*
\end{verbatim}

\paragraph{Implementation notes.}
First, the pseudo-code above is intentionally agnostic about the first-stage production-function estimator. In applications, the researcher should replace the placeholder in Step 4.2 with the chosen production-function routine and then feed the resulting elasticity estimates into the remaining steps. Second, the code is specialized to the two-input case discussed in the main text. With more than two variable inputs, one may either aggregate inputs into economically meaningful bundles or compute all pairwise FAS objects and report their union, as discussed in Appendix~B. Third, the bootstrap confidence intervals rely on the local regime-separation condition, namely $\Delta s_t^{1,2} \neq 0$ and $\Delta m_t^{1,2} \neq 0$, together with a valid Gaussian limit law and bootstrap consistency for the  estimator.

\end{document}